\theoremstyle{definition}
\newtheorem{theorem}{Theorem}
\newtheorem{assumption}{Assumption}
\newcommand{\instructionresponse}[2]{%
    \begin{mdframed}[linecolor=gray!50,backgroundcolor=gray!10]
    \textbf{Instruction:} #1\\
    \textbf{Response:} #2
    \end{mdframed}
}
\definecolor{lightergray}{rgb}{0.95, 0.95, 0.95}
\def\eqref#1{equation~\ref{#1}}
\def\1{\bm{1}}
\DeclareMathAlphabet{\mathsfit}{\encodingdefault}{\sfdefault}{m}{sl}
\SetMathAlphabet{\mathsfit}{bold}{\encodingdefault}{\sfdefault}{bx}{n}
\title{FinRipple: Aligning Large Language Models with Financial Market for Event Ripple Effect Awareness}
\author{
  \textbf{Yuanjian Xu}\textsuperscript{1}, 
  \textbf{Jianing Hao}\textsuperscript{1}, 
  \textbf{Kunsheng Tang}\textsuperscript{2}\thanks{This work was done while Kunsheng Tang was a research assistant at HKUST(GZ).}, 
  \textbf{Jingnan Chen}\textsuperscript{1}, \\
  \textbf{Anxian Liu}\textsuperscript{1}, 
  \textbf{Peng Liu}\textsuperscript{1}, 
  \textbf{Guang Zhang}\textsuperscript{1}\thanks{Corresponding author: \href{mailto:guangzhang@hkust-gz.edu.cn}{guangzhang@hkust-gz.edu.cn}} \\
  \textsuperscript{1}The Hong Kong University of Science and Technology (Guangzhou) \\
  \textsuperscript{2}University of Science and Technology of China
}
\begin{document}
\maketitle

\begin{abstract}

Financial markets exhibit complex dynamics where localized events trigger ripple effects across entities. Previous event studies, constrained by static single-company analyses and simplistic assumptions, fail to capture these ripple effects. While large language models (LLMs) offer emergent reasoning capabilities, their direct application falters due to structural market unawareness and limited capacity to analyze ripple effects. We propose FinRipple, an elegant framework that empowers LLMs with the ability to analyze ripple effects through financial theory-guided large-scale reinforcement learning. We begin by relaxing the assumptions of previous methods, incorporating a time-varying knowledge graph to accurately represent market structure. By seamlessly integrating classical asset pricing theory, we align the LLM with the market, enabling it to predict ripple effects. 
To the best of our knowledge, we are the first to provide a standardized definition of ripple effect prediction, a task that is extremely important yet unexplored in the financial domain.
Extensive experiments demonstrate that FinRipple provides a promising solution to this task.

\end{abstract}

\section{Introduction}

Financial markets are naturally complex, and sudden events can often impact the value of companies.~\citep{eventstudy_2017}.
A recent example underscores the impact of market reactions: On August 13, 2024, Starbucks announced Chipotle CEO Brian Niccol as its new CEO, triggering a 24.5\% surge in Starbucks' stock---the largest single-day gain in its history---while Chipotle's stock dropped over 10\%. The ripple effect extended to Starbucks' supply chain, with Jones Soda Co. rising 9.52\%, BRC Inc. gaining 6.25\%, and Celsius Holdings Inc. up 3.81\%. This example demonstrates the ripple effect that a single market event can have, not just on the company involved, but on other relevant companies~\citep{multisource_2023}.
%   预测ripple effects日益重要 （for 金融决策，风险管理，市场稳定 etc.)
Understanding and predicting these market ripple effects is crucial for informed financial decision-making, risk management, and strategic portfolio optimization.
Investors and risk managers rely on such insights into how company announcements~\citep{chief_2010, NPD_2015}, external news~\citep{new_role_2013, recall_2015}, or macroeconomic shocks~\citep{third_party_2012} may cascade through the market to anticipate broader impacts, enabling proactive strategies in volatile conditions~\citep{event_driven_prediction_2015,structured_event_prediction_2014}.
% allowing them to optimize portfolios, mitigate risks, and act swiftly in 
%   定义和关注市场影响预测在现有工作中未完全挖掘
However, capturing these ripple effects remains a complex and underexplored challenge due to intricate, evolving, and interconnected factors at play.

\begin{figure*}
    \centering
    \includegraphics[width=0.98\textwidth]{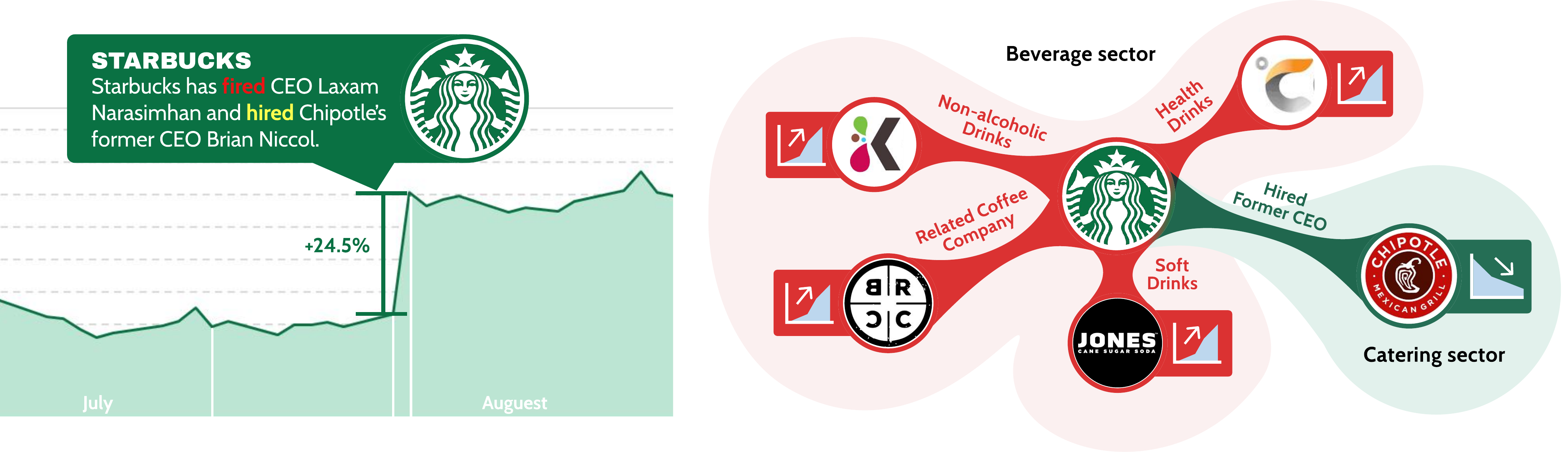}
    \vspace{-0.5em}
    \caption{An example of market ripple effects. The announcement of Starbucks's CEO change not only boosted its stock but also positively impacted other related companies in the beverage sector.}
    \label{fig:intro}
    \vspace{-0.7em}
\end{figure*}

% 2. limitations of existing works
%   event study的传统方法缺点：大部分的只关注自己公司的变化；5%的工作对同行业的其他公司进行影响，但是限制很明显 英伟达- 建立复杂的网络图； 做表 传播路机构
%   先前的工作包含case-by-case的类型和基于深度学习的工作。
Event studies have followed two main directions: case-by-case analysis and unified modeling based on learning theory. 
The former focuses on how specific market events affect the stock performance of a company or industry, which is a rather simplified assumption.
For example, \citet{austin_event_1993} analyzed patent innovations in biotechnology, \citet{lepetit_event_2004} studied M\&As in banking, and \citet{ramiah_event_2013} assessed stock reactions to green policies. 
While useful for direct impact assessment, these studies struggle to capture ripple effects across industries or the broader market.
On the other hand, learning-based approaches primarily use news sentiment to predict stock movements, acknowledging that a company's stock price is influenced by its related entities~\citep{news_survey_2023}. 
Recent innovations integrate multi-source information~\citep{multisource_2023} to enhance the prediction of emotions. \citet{finance_trans_2020} demonstrated that transformer-based models outperform lexicon-based and statistical approaches in event-driven word representation. 
However, relying solely on text sentiment can overlook critical dynamics—for instance, positive news for one company may negatively impact its associates. 
Thus, a more comprehensive framework is needed to model ever-changing market dynamics and explain complex intercompany relationships.

% 3. The rise of LLM
%   强大推理能力，特别是结构化信息抽取，类比推理，问题回答 -> well-suited for understanding event-driven ripple effects.
Recently, large language models (LLMs) have been widely used across various domains due to their advanced reasoning abilities~\citep{to_reasonig_2022}. 
They excel in structured information extraction~\citep{visltr_2024}, analogical reasoning~\citep{LLM_interpretab_reasoning_2022,cot_2022}, and question answering, making them promising candidates for analyzing event-driven ripple effects. 
Given their ability to model complex interactions, leveraging LLMs for financial market predictions is a natural step. 
However, financial markets, characterized by interconnected companies and dynamic relationships, evolve in response to various events, making the direct application of LLMs insufficient and potentially misleading~\citep{fts_fore_survey_2022, ADGAT_2021}.
To accurately model ripple effects, LLMs must be complemented with the latest market state.

A viable solution to address this challenge lies in integrating a time-varying financial knowledge graph (KG), which provides a structured view of the market by capturing up-to-date company relationships. 
Continuously updating the KG ensures a reliable snapshot of the evolving market~\citep{kg_fact_aware_2023}, enabling the modeling of dynamic corporate interactions~\citep{finKG_2020}. 
To effectively incorporate this knowledge into the LLM, we employ an adapter-based approach, injecting structured information without retraining the model from scratch. 
This method avoids potential information loss from retrieval-based methods and offers an extendable framework. 
By aligning LLMs with the financial market's unique characteristics, our approach significantly enhances their ability to analyze event-driven ripple effects. We validate its effectiveness in asset pricing and portfolio management through extensive experiments. The contributions of this work can be summarized as follows:
\begin{itemize}
\setlength{\itemsep}{0.2pt}
\setlength{\parskip}{0pt}
    \item FinRipple integrates classic asset pricing theory with advanced LLMs demonstrating strong performance in predicting excess returns while maintaining high interpretability.
    \item We rigorously validate our training framework and showcase its strong potential for real-world applications, such as asset pricing and portfolio management. Furthermore, detailed analyses illustrate the model's reasoning pathways, confirming its ability to provide reliable insights into the causal relationships driving ripple effects.
    \item We first formulate the under-explored ``ripple effect prediction'' task and provide an open-source benchmark, offering a unified evaluation standard for academia and industry.
\end{itemize}

\section{Related Work}

\subsection{Event studies in finance}
Event studies have been extensively employed to assess the impact of significant events on asset prices~\citep{eventstudy_2017}.
An event can be a firm announcement (e.g., the appointment of a new CMO) or an announcement made by competitors or regulatory bodies~\citep{event_2006}.
For example,~\citet{austin_event_1993} measured the innovative output of patents within the biotechnology industry;~\citet{lepetit_event_2004} discussed the effects of M\&As in the banking industry; and~\citet{ramiah_event_2013} analyzed the stock market reaction to green policy announcements. 
Due to simplistic assumptions, these methods often fail to capture the complexity and dynamics of modern financial markets.

Recognizing these limitations, researchers have explored unified modeling approaches based on learning theory, typically utilizing news sentiment analysis to predict stock price movements~\citep{trading_2010,sentimeng_2016}.
Recent advancements include the integration of multi-source information~\citep{multisource_2023}, the employment of more advanced embedding models~\citep{wordemb_dl_2019, finance_trans_2020}, and the usage of large language models (LLMs)~\citep{bloomberggpt,fingpt}.
Despite these promising developments, existing learning-based approaches struggle to fully capture the dynamic, time-varying relationships between companies and the evolving financial market.
Recent efforts on LLMs for financial tasks have aimed to overcome these challenges through multi-agent systems~\citep{fincon_2024,finmem_2024,stockagent_2024} and by infusing financial trading knowledge\citep{fin1_2024}.
Considering the structured, dynamic representations provided by knowledge graphs (KGs)~\citep{kg_dynamic_2023}, FinRipple takes an alternative approach by combining LLMs with financial KGs to capture ever-changing market dynamics and explain complex intercompany relationships.

% Moreover, a predominant focus on text sentiment for classification tasks can lead to the loss of critical information. For example, positive news about one company might negatively impact an associated company—a nuance that sentiment analysis may overlook.
% Therefore, there is an urgent need for a more comprehensive approach capable of capturing ever-changing market dynamics and explaining complex intercompany relationships.
% By leveraging the emergent reasoning abilities of LLMs and the structured, dynamic representations provided by financial KGs, FinRipple aims to analyze and predict the ripple effects of sudden events across the financial market more effectively than existing models.

\subsection{KG Augmented LLM}
% 通过增加kg，能够减轻幻觉，增强llm的推理能力，准确地回答问题
Through the augmentation of knowledge graphs (KGs), existing methodologies strive to mitigate hallucinations, enhance reasoning capabilities, and facilitate the recall of specific facts~\citep{kg_multi_2024,cankg_survey_2023}.
% 现有的使用知识图谱来加强llm的工作主要分为两类，1）kgs注入到llm的预训练中 2）将kg注入到llm的推理中
Research on leveraging KGs to enhance LLMs can be broadly categorized into two main directions~\citep{mindmap_2023,cankg_survey_2023}: 1) integrating KGs during the pre-training phase, and 2) injecting KGs during the inference stage.
For methods that integrate KGs into LLM pre-training, the common practice involves designing knowledge-aware training objectives by either incorporating KG entities and relations into the training data~\citep{ernie_2019,ernie3_2021} or applying KG prediction tasks, such as link prediction, as additional supervision~\citep{graph_llm_pretrain_2022}.
% 1 这种方法直接把知识压缩到Llm参数中，但是创建具有万亿词规模的kgs是困难的；同时没有减轻llm在灵活性可靠性和透明度方面的限制
These methods directly compress KG knowledge into the parameters of LLMs through supervision. 
However, constructing KGs containing trillions of words is challenging, and these approaches do not address the fundamental limitations of LLMs in terms of flexibility, reliability, and transparency.

% 2 将结构知识整合的kg注入到llm推理过程的方法主要是通过在输入层面整合他们来增强上下文理解
Injecting structured symbolic knowledge from KGs into LLM inference aims to enhance contextual understanding, primarily by integrating knowledge at the input level.
Early efforts focused on fusing KG triples into the inputs of LLMs through attention mechanisms~\citep{kbert_2020,colake_2020} or by attaching graph encoders to LLM encoders to process KG information~\citep{graphencoder_2019}.
Subsequent work further adopted graph neural networks (GNNs) in parallel with LLMs for joint reasoning~\citep{QAGNN_2021}, as well as introducing interactions between text tokens and KG entities in the intermediate layers of LLMs~\citep{greaselm_2022,beyondcot_2023}.

\section{Methodology}
In this section, we commence by formalizing the mathematical framework for the ripple prediction task. We initially delineate the necessary inputs and outputs for the task, as well as the evaluation metrics. Subsequently, in Section~\ref{sec:pipeline}, we present the overall architecture of FinRipple. This architecture primarily comprises two pivotal components: knowledge injection and market alignment. The theoretical underpinnings of the optimization objectives can be found in Appendix~\ref{sec:problem_setting}.
\begin{figure*}
    \centering
    \includegraphics[width=0.98\linewidth]{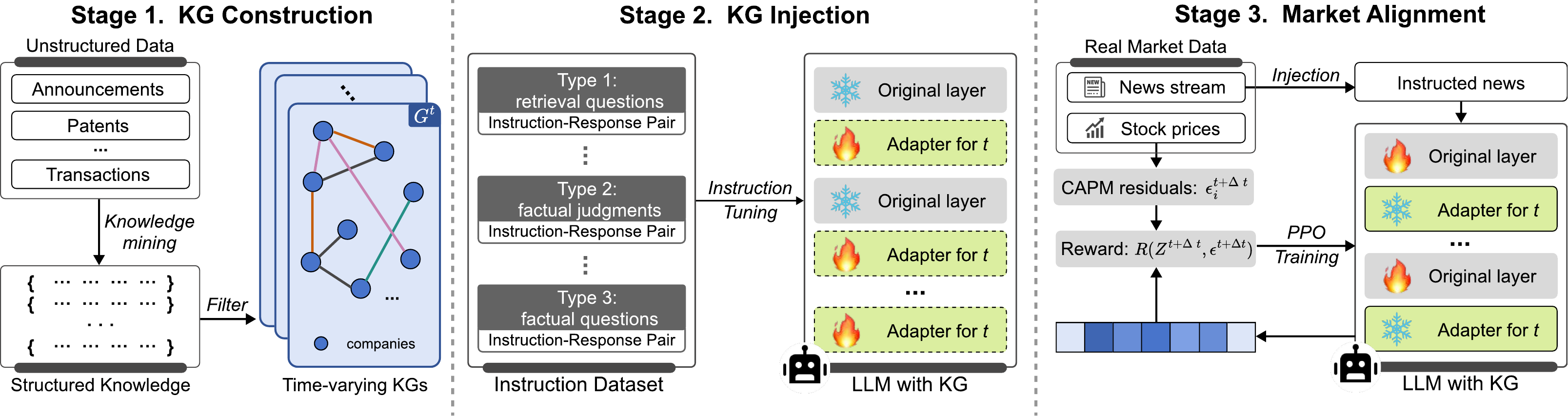}
    \vspace{-0.4em}
    \caption{Overview of FinRipple. The framework comprises three stages: (1) KG Construction: transforming unstructured data, such as announcements, patents, and transactions, into time-varying KGs that capture company relationships; (2) KG Injection: creating instruction datasets based on these KGs and using them to inject structured knowledge into adapters of an LLM without retraining the original layers; (3) Market Alignment: aligning predictions with real market reaction by using the correlation between the predicted event impact and CAPM residuals as the reward for PPO to optimize model performance. The adapter is frozen, and the analysis ability is parameterized into the original layers of the LLM.}
    \vspace{-0.4em}
    \label{fig:framework}
\end{figure*}

\subsection{Problem Formulation} 
\label{sec:problem_formulation}
The financial ecosystem evolves through the structured triad $\mathcal{M}_t = (\mathcal{C}_t, \mathcal{E}_t, \mu_t)$, where $\mathcal{C}_t$ captures the universe of public firms, $\mathcal{E}_t$ the event space, and $\mu_t$ the signed interaction measure. This measure's duality -- magnitude $|\mu_t(c_i,c_j)|$ for connection strength and polarity $\mathrm{sign}(\mu_t(c_i,c_j))$ for cooperation/competition -- synthesizes cross-channel dependencies spanning operational, financial, and strategic linkages.

Central to our framework is the propagator $\Phi_{e_t,\theta}$, a parametric operator that maps event-context pairs to forward shock distributions:
\[
\Phi_{e_t,\theta}: \underbrace{\mathcal{E}_t \times \mathcal{M}_t}_{\text{Event-Context}} \to \underbrace{\mathbb{R}^{\mathcal{C}_{t+\Delta t}}}_{\text{Shock Magnitudes}},
\]
where $\theta$ parameterizes network diffusion dynamics. Its validation requires grounding in asset pricing fundamentals: given stochastic discount factor $D_{t+\Delta t}$, the pricing error $\epsilon_j = \mathbb{E}_t[D_{t+\Delta t}R_j^{ex}]$ quantifies deviations from no-arbitrage equilibrium for firm $j$.

The core specification standardizes pricing errors by their cross-sectional volatility $\sigma_\epsilon = \sqrt{\mathrm{Var}(\epsilon_j)}$ to ensure scale invariance, yielding the propagator-constrained regression:  
\[
\frac{\epsilon_j}{\sigma_\epsilon} = \gamma_0 + \gamma_1 \Phi_{e_t,\theta}^j + \sum_{k=1}^K \Gamma_k X_{k,j} + \nu_j,
\]  
Here $\gamma_0$ captures baseline pricing anomalies, $\gamma_1$ quantifies the risk premium attributed to network-propagated shocks via the propagator component $\Phi_{e_t,\theta}^j$, and $\Gamma_k$ controls for $K$ standard risk factors $X_{k,j}$. The residual $\nu_j$ represents unexplained pricing noise with variance $\sigma_\nu^2$.  

Explanatory power is measured through normalized variance absorption:  
\[
R^2_\Phi = 1 - \frac{\mathbb{E}\left[ (\epsilon_j/\sigma_\epsilon - \hat{\gamma}_1\Phi^j)^2 \right]}{\mathrm{Var}(\epsilon_j/\sigma_\epsilon)},
\]  
where the expectation operator $\mathbb{E}[\cdot]$ averages over the cross-section of firms $\mathcal{C}_{t+\Delta t}$. Values $R^2_\Phi > 0.15$ indicate economically meaningful improvements over benchmark factor models. Statistical significance is evaluated through the robust $t$-statistic. A threshold $|t_{\gamma_1}| > 2.58$ ($p < 0.01$) establishes inference reliability.  More evaluation details can be found in Appendix~\ref{baselines}

\subsection{The pipeline of FinRipple}
\label{sec:pipeline}
% In this subsection, we introduce the implementation details of FinRipple. 
As shown in Figure~\ref{fig:framework}, FinRipple starts with the construction of time-varying KGs that incorporate four relationships supported by prior research: leadership networks, mutual fund holdings, patent relationships, and supply chains. The specific data sources and construction process for the KG can be found in Appendix~\ref{appendix:Knowledge Graph Construction}. The next two key steps are KG injection and market alignment, which we will introduce in the following subsections.

\subsubsection{Knowledge Graph Injection}
\label{subsec:kg2instruction}

FinRipple implements time varying KG integration through structured instruction generation and parameter-efficient adaptation. Each dynamic KG snapshot $G^t = (\mathcal{C}_t, \mathcal{R}_t)$ contains the set of public firms $\mathcal{C}_t$ and time-sensitive relations $\mathcal{R}_t$, encoding four validated interaction types: leadership overlaps (CEO/board linkages), mutual fund cross-holdings, patent co-development relationships, and supply chain dependencies. These relations are projected into instructional text via templated transformations.

For each relational triple $(c_i, r_k, c_j) \in G^t$, the mapping operator $\mathcal{T}_k$ generates question-answer pairs that capture both qualitative and quantitative aspects of the relationship. A supply chain example produces:  
\instructionresponse{``Identify primary suppliers for $c_i$ in 2023Q2''}{``$c_j$ provided \$2.3M semiconductor components with 98\% on-time delivery''}
% \left(\text{``Identify primary suppliers for $c_i$ in 2023Q2"}, \text{``$c_j$ provided \$2.3M semiconductor components with 98\% on-time delivery"}\right)

The instruction set $\mathcal{D}^t = \{(x_i^t, y_i^t)\}$ integrates three query modalities: entity retrieval probes (e.g., ``List firms sharing board members with $c_j$''), factual verification tasks (``Did $c_i$ acquire $c_k$ in 2021?''), and quantitative inference questions (``What percentage of $c_j$'s R\&D budget funds joint patents with $c_i$?''). 
Ablation studies confirm the necessity of this multimodal design (Table~\ref{tab:ablation_study} in Appendix~\ref{appendix:other_exp}).

We use lightweight adapter modules to parameterize time-varying KGs in which $\phi_t$ as additional parameters -- distinct from and operating in parallel to the frozen base LLM parameters $\psi$. These adapters constitute only 3.2\% of the total parameter count while enabling temporal adaptation. To maintain temporal coherence, high-impact instructional pairs from prior periods are retained in a rotating buffer, ensuring persistent interdependencies remain accessible. The complete implementation -- including temporal alignment protocols and adapter initialization -- is detailed in Appendix~\ref{tag:details}.

\subsubsection{Market Alignment}
Before the training process, for each news item, we retrieve the corresponding KG for the relevant time and inject it into the adapter, enabling the model to adapt to the time-varying market structure. Importantly, each time we fine-tune the backbone of the LLM, the adapter, which stores the information of the KG, is reinitialized and then kept frozen, ensuring compatibility between the updated backbone parameters and the dynamically injected knowledge. The adapter, once frozen, functions as a static feature extractor that represents market features at specific times. Meanwhile, the LLM backbone learns to make predictions consistent with the current market context.
 During the market alignment phase, FinRipple is primarily based on large-scale reinforcement learning. By carefully designing the feedback mechanism, we integrate the classic CAPM theory with alignment technologies, endowing the LLMs with the ability to analyze the ripple effect. The propagator $\Phi_{e_t,\theta}$'s predictions are validated through CAPM residual analysis. For company $c_j \in \mathcal{C}_{t+\Delta t}$, define:
\begin{align*}
\begin{cases}
E[R_j^{t+\Delta t}] = R_f + \beta_j(R_m^{t+\Delta t} - R_f) \\
\epsilon_j^{t+\Delta t} = R_j^{t+\Delta t} - E[R_j^{t+\Delta t}]
\end{cases}
\end{align*}
where $\beta_j = \frac{\text{Cov}(R_j, R_m)}{\text{Var}(R_m)}$ is estimated via OLS over rolling windows. The propagator's output $Y^{t+\Delta t} \in \mathbb{R}^{|\mathcal{C}_t| \times d}$ is aggregated to shock magnitudes:
\[
Z_j^{t+\Delta t} = \sum_{i=1}^{|\mathcal{C}_t|} \mu_t(c_i, c_j) \cdot Y_{ij}^{t+\Delta t}
\]
The alignment between predicted shocks $Z^{t+\Delta t}$ and observed residuals $\epsilon^{t+\Delta t}$ is quantified through:
% {\small
\begin{equation*}
\mathcal{R}(Z, \epsilon) = \underbrace{\frac{Z \cdot \epsilon}{\|Z\|\|\epsilon\|}}_{\text{direction match}} + \lambda \underbrace{\frac{\sum_{j=1}^{|\mathcal{C}_{t+\Delta t}|} \min(|Z_j|, |\epsilon_j|)}{\|\epsilon\|_1}}_{\text{magnitude coverage}}
\label{eq:reward}
\end{equation*}

The first term of the above reward function measures how precisely the predicted impacts can explain the CAPM residuals, ensuring the model accurately learns the influence magnitude of specific events.
At the same time, the regularization controlled by the hyperparameter \(\lambda\) maximizes the recall rate to cover as many relevant impacts as possible. 
The role of the regularization term is to evaluate the extent to which $Z^{t+\Delta t}$ covers $\epsilon^{t+\Delta t}$ by comparing their values element by element (during training, $\Delta t$ is set to 1). 
More training details can be found in Appendix~\ref{tag:details}.

\subsubsection{FinRipple}
We collect the reward \( R^t \) to fine-tune the LLM backbone using Proximal Policy Optimization (PPO), while keeping the adapter layers frozen. The fine-tuning process follows the pipeline described below. First, we iterate through all available news articles. For each news item, we inject the KG corresponding to the specific month into the adapter. This allows the model to adapt to the time-varying market structure encoded within the KG. Importantly, every time we fine-tune the model, we utilize a newly initialized adapter to ensure that the updated LLM backbone parameters are compatible with the dynamic knowledge injected from the KG.

Once the KG is injected, we proceed with PPO fine-tuning for the LLM backbone. The frozen adapter serves as a static market feature at a certain time, while the LLM backbone learns to make predictions that align with the current market context reflected in the news and KG data.

\section{Experiment}

\subsection{Baselines and Evaluation Metrics}
In this subsection, we provide a brief introduction to the benchmarks and metrics for the asset pricing task only. For further details and information on downstream tasks related to portfolio management, please refer to Appendix~\ref{baselines}.

\paragraph{Datasets}
We selected 10,000 news articles about S\&P 500 companies from January 1, 2020, to June 30, 2022, as the test set, while approximately 110,000 articles from other years were used for training. Detailed statistics on the dataset about news and KGs can be found in Appendix~\ref{sec:news_stat}.

\paragraph{Baselines} We adopt several mainstream methods to demonstrate that FinRipple offers a powerful solution for this task. 
The baselines are primarily divided into two categories. The first category tests the analogical reasoning capabilities of foundational LLMs, showing that untrained LLMs lack the ability to analyze event impact effectively. 
The basic Retrieval-Augmented Generation (RAG)~\citep{lewis2020retrieval} approach utilizes an embedding model to retrieve relevant subgraph information from the KG, enabling LLMs to assess impacts based on this data. Zero-Shot Inference provides instructions to the model along with news and concatenated graph information. 
However, due to the limited window size of LLMs, some graph data may be incomplete. 
For companies specifically mentioned in the news, a two-hop subgraph is concatenated; otherwise, random graph information fills the LLM's input window. In-Context Learning (ICL)~\citep{brown2020language} builds upon the Zero-Shot approach by adding an example to aid the LLM in reasoning.
The second category primarily includes fine-tuned variations of FinRipple, both with and without market alignment. It emphasizes that even if the LLM effectively absorbs the graph information, without aligning with market dynamics, the model still lacks the ability to effectively analyze the impact of events. 
\paragraph{Evaluation metrics}
To evaluate the effectiveness of FinRipple in analyzing financial market shocks, we designed an evaluation framework focusing on three metrics: (1) explanatory power on the mean of the residuals, (2) explanatory power on the variance of the residuals, and (3) the refusal-to-answer rate. The residuals, derived from a CAPM regression of stock returns against market returns, represent the portion of returns unexplained by market factors. We use these residuals to assess whether predicted event impacts significantly explain return variance through regression analysis and ANOVA, with $p$-values indicating statistical significance. Additionally, the refusal-to-answer rate evaluates the robustness of LLMs in generating meaningful responses in complex, event-driven contexts.

\subsection{Main Results Analysis}

\begin{table*}[h]
\centering
\resizebox{\textwidth}{!}{%
  \begin{tabular}{l ccc ccc ccc ccc ccc}
  \toprule
  \multirow{2}{*}{\textbf{Model}} & \multicolumn{3}{c}{\textbf{RAG}} & \multicolumn{3}{c}{\textbf{Zero-Shot}} & \multicolumn{3}{c}{\textbf{ICL}} & \multicolumn{3}{c}{\textbf{FinRipple/w-o alignment}} & \multicolumn{3}{c}{\textbf{FinRipple}} \\
  \cmidrule(lr){2-4} \cmidrule(lr){5-7} \cmidrule(lr){8-10} \cmidrule(lr){11-13} \cmidrule(lr){14-16}
   & \textbf{Coef.} & \textbf{p-value} & \textbf{R$^2$} & \textbf{Coef.} & \textbf{p-value} & \textbf{R$^2$} & \textbf{Coef.} & \textbf{p-value} & \textbf{R$^2$} & \textbf{Coef.} & \textbf{p-value} & \textbf{R$^2$} & \textbf{Coef.} & \textbf{p-value} & \textbf{R$^2$} \\
  \midrule
  llama2-7b-chat & 0.012& 0.452 & 0.009& 0.031 & 0.601 & 0.012 & 0.042 & 0.503 & 0.018 & 0.047 & 0.510 & 0.019 & 0.150* & 0.030 & 0.083 \\
  llama2-13b-chat & 0.103 & 0.305 & 0.054 & 0.079 & 0.349 & 0.039 & 0.098 & 0.281 & 0.061 & 0.102 & 0.287 & 0.058 & 0.242** & 0.009 & 0.193 \\
  llama3-8b-instruct & 0.091 & 0.318 & 0.047 & 0.072 & 0.402 & 0.037 & 0.107 & 0.254 & 0.058 & 0.110 & 0.249 & 0.060 & 0.278** & 0.004 & 0.251 \\
  vicuna-7b-chat & 0.118 & 0.247 & 0.063 & 0.102 & 0.298 & 0.052 & 0.129 & 0.198 & 0.081& 0.125 & 0.205 & 0.074 & 0.330*** & 0.001 & 0.310 \\
  vicuna-13b-chat & 0.248* & 0.032 & 0.248 & 0.148 & 0.149 & 0.082 & 0.176 & 0.098 & 0.102 & 0.171* & 0.040 & 0.108 & 0.395*** & 0.000 & 0.340 \\
  Phi-3.5-mini-instruct & 0.082 & 0.395 & 0.032 & 0.065 & 0.498 & 0.019 & 0.094 & 0.347 & 0.052& 0.096 & 0.340 & 0.045 & 0.245** & 0.006 & 0.155 \\
  gemma-2-9b-it & 0.097 & 0.298 & 0.048 & 0.083 & 0.354 & 0.038 & 0.112 & 0.245 & 0.063 & 0.109 & 0.252 & 0.061 & 0.290*** & 0.001 & 0.215 \\
  \midrule
  GPT 3.5 & 0.083 & 0.398 & 0.028 & 0.062 & 0.051 & 0.075 & 0.056** & 0.004 & 0.112 & / & / & / & / & / & / \\
  GPT o1-preview & 0.152 & 0.342 & 0.047 & 0.119 & 0.392 & 0.056 & 0.192 & 0.229 & 0.082 & / & / & / & / & / & / \\
  GPT 4o-mini & 0.124 & 0.312 & 0.042 & 0.312* & 0.013 & 0.035 & 0.104 & 0.879 & 0.103 & / & / & / & / & / & / \\
  \bottomrule
  \end{tabular}
}
\vspace{-0.4em}
\caption{Comparison of baselines and FinRipple on LLMs. This table focuses on the explanatory power on the value of the CAPM residuals. The significance levels are indicated as follows: * \( p < 0.05 \), ** \( p < 0.01 \), *** \( p < 0.001 \). Note that cells containing a slash (/) indicate that the model does not have open-sourced weights available.}
\vspace{-0.4em}
\label{tab:capm_r_baseline_comparison}
\end{table*}

\begin{table*}[h]
\centering
\resizebox{\textwidth}{!}{%
  \begin{tabular}{l ccc ccc ccc ccc ccc}
  \toprule
  \multirow{2}{*}{\textbf{Model}} & \multicolumn{3}{c}{\textbf{RAG}} & \multicolumn{3}{c}{\textbf{Zero-Shot}} & \multicolumn{3}{c}{\textbf{ICL}} & \multicolumn{3}{c}{\textbf{FinRipple/w-o alignment}} & \multicolumn{3}{c}{\textbf{FinRipple}} \\
  \cmidrule(lr){2-4} \cmidrule(lr){5-7} \cmidrule(lr){8-10} \cmidrule(lr){11-13} \cmidrule(lr){14-16}
   & \textbf{ANOVA-F} & \textbf{ANOVA-p} & \textbf{ES} & \textbf{ANOVA-F} & \textbf{ANOVA-p} & \textbf{ES} & \textbf{ANOVA-F} & \textbf{ANOVA-p} & \textbf{ES} & \textbf{ANOVA-F} & \textbf{ANOVA-p} & \textbf{ES} & \textbf{ANOVA-F} & \textbf{ANOVA-p} & \textbf{ES} \\
  \midrule
  llama2-7b-chat & 1.624 & 0.231 & 0.089 & 1.304 & 0.274 & 0.068 & 2.392 & 0.097 & 0.108 & 2.565 & 0.082 & 0.092 & 3.123* & 0.033 & 0.142 \\
  llama2-13b-chat & 2.175 & 0.139 & 0.102 & 1.782 & 0.188 & 0.082 & 2.634 & 0.075 & 0.117 & 3.052* & 0.051 & 0.105 & 4.103** & 0.012 & 0.198 \\
  llama3-8b-instruct & 1.210 & 0.324 & 0.085 & 2.221 & 0.141 & 0.099 & 2.452 & 0.088 & 0.112 & 2.835 & 0.069 & 0.101 & 4.110** & 0.010 & 0.203 \\
  vicuna-7b-chat & 0.910 & 0.452 & 0.071 & 1.512 & 0.248 & 0.074 & 2.731 & 0.060 & 0.115 & 2.672 & 0.074 & 0.097 & 3.832* & 0.019 & 0.341 \\
  vicuna-13b-chat & 2.703 & 0.112 & 0.115 & 2.910* & 0.058 & 0.110 & 3.001* & 0.052 & 0.125 & 3.932** & 0.031 & 0.119 & 5.231*** & 0.003 & 0.287 \\
  Phi-3.5-mini-instruct & 1.563 & 0.257 & 0.097 & 2.334 & 0.126 & 0.104 & 2.815 & 0.062 & 0.118 & 3.014* & 0.048 & 0.110 & 4.315** & 0.009 & 0.215 \\
  gemma-2-9b-it & 2.443 & 0.128 & 0.109 & 1.905 & 0.172 & 0.091 & 2.447 & 0.089 & 0.095 & 3.122* & 0.039 & 0.108 & 4.012** & 0.014 & 0.159 \\
  \midrule
  GPT 3.5 & 1.375 & 0.301 & 0.090 & 1.645 & 0.223 & 0.088 & 2.087 & 0.129 & 0.105 & / & / & / & / & / & / \\
  GPT 4.0-preview & 0.812 & 0.443 & 0.067 & 2.112 & 0.145 & 0.100 & 2.372 & 0.098 & 0.117 & / & / & / & / & / & / \\
  GPT 4o-mini & 2.153 & 0.144 & 0.099 & 2.875* & 0.059 & 0.108 & 3.245 & 0.061 & 0.145 & / & / & / & / & / & / \\
  \bottomrule
  \end{tabular}
  }
\vspace{-0.4em}
\caption{Comparison of baselines and FinRipple on various models using ANOVA analysis. ANOVA-F represents the F-value from the ANOVA test, indicating the ratio of systematic to error variance. ANOVA-p represents the p-value for statistical significance, with * for \( p < 0.05 \), ** for \( p < 0.01 \), and *** for \( p < 0.001 \). Eta Squared (ES) represents the correlation ratio, which indicates the proportion of variance explained by the model. Cells with a slash (/) indicate that the model cannot be fine-tuned using FinRipple due to unavailable open-source weights.}
\vspace{-0.4em}
\label{tab:anova_comparison}
\end{table*}

As shown in Table~\ref{tab:capm_r_baseline_comparison}, both open-source and closed-source LLMs face significant challenges in analyzing the impact of financial market events without domain-specific training. The results establish three critical insights into LLMs' capabilities for financial ripple effect prediction. General-purpose architectures demonstrate systematic limitations in event-driven scenarios, with RAG methods showing performance instability due to deficient event-context extraction and ICL providing negligible improvements over zero-shot baselines. The observed R\textsuperscript{2} values below 0.25 across multiple model families confirm these fundamental constraints.

A hierarchical pattern emerges in knowledge-enhanced approaches. Basic market information infusion yields marginal gains, while domain-adapted implementations exhibit transformative improvements. The performance differential between baseline and fine-tuned configurations reveals that market dynamics internalization, not mere data injection, drives meaningful capability enhancement. Notably, model scale proves secondary to domain alignment, as evidenced by smaller architectures outperforming larger counterparts post-adaptation.

The demonstrated success of targeted domain adaptation over architectural size or general capabilities suggests that isomorphic mapping between knowledge systems and market mechanisms enables causal reasoning beyond native model capacities. This repositions domain-specific alignment as the critical pathway for developing professional-grade financial analytics systems.

\begin{table*}[h]
    \centering
    \resizebox{\textwidth}{!}{%
    \begin{tabular}{l c c c c}
        \toprule
        \textbf{Benchmark} & \textbf{Daily Return ($R_d \times 10^{-1}$)} & \textbf{Sharpe Ratio ($S_a$)} & \textbf{Maximum Drawdown (MDD)} & \textbf{Win Rate} \\
        \midrule
        Equal Weighting & 0.034 & 0.882 & -0.351 & 0.582 \\
        Volatility Weighting & 0.041 & 1.021 & -0.312 & 0.643 \\
        Markowitz Model & 0.029 & 0.954 & -0.292 & 0.613 \\
        Min-Variance Weighting & 0.028 & 0.821 & -0.401 & 0.552 \\
        \textbf{FinRipple} & 0.052 & 1.153 & -0.283 & 0.685 \\
        \bottomrule
    \end{tabular}
    }
    \vspace{-0.5em}
    \caption{Summary of backtest results for different portfolio management strategies on S\&P 500 constituent stocks (January 2020 to June 2022). Note that the daily return is presented with a factor of $10^{-1}$ for better readability.}
    \vspace{-0.5em}
    \label{tab:strategy_performance}
\end{table*}

\subsection{Portofolio Management}

To further demonstrate the effectiveness of FinRipple, we implement a simple intraday trading strategy based on the event impact prediction. The strategy selects stocks that exhibit the highest positive predicted event-driven impacts and creates a daily portfolio that rebalances at the end of each trading day.
Specifically, the steps are as follows:
\begin{enumerate}
    \item Each morning, based on the predicted impact results, we rank all stocks in our universe by the magnitude of their predicted impact.
    \item The top 10\% of stocks with the highest predicted positive impact are selected for a long position, while the bottom 10\% with the highest predicted negative impact are shorted.
    \item At the end of the day, the portfolio is rebalanced, and the next day's selection is based on new predictions.
\end{enumerate}
In accordance with previous portfolio management studies~\citep{xu2024plutus}, we selected benchmarks including Equal Weighting, Volatility Weighting, the Markowitz Model, and Min-Variance Weighting. Furthermore, we employed multiple evaluation metrics, such as daily return ($R_d$), sharpe ratio ($S_a$), and maximum drawdown (MDD), as presented in Table~\ref{tab:strategy_performance}. To prevent data contamination, the backtest period was set from January 2020 to June 2022, ensuring the result reliability. A detailed introduction to portfolio strategies and their evaluation can be obtained in Appendix~\ref{baselines}.

The results show that accurately predicting the range of impacts from financial market events can significantly mitigate portfolio risks. 
The strategy based on FinRipple outperforms benchmarks in key metrics, including daily return, Sharpe ratio, and maximum drawdown, achieving a daily return of $0.052$, a Sharpe ratio of $1.153$, and a maximum drawdown of $-0.283$. In contrast, strategies like Equal Weighting and Min-Variance Weighting exhibit higher maximum drawdowns, indicating greater vulnerability to market shocks without precise impact predictions. 
Overall, accurate event impact forecasting is crucial for enhancing risk control and improving investment outcomes.

% \begin{table}[h]

% The results in Table~\ref{tab:strategy_performance}  demonstrates that our event-based strategy outperforms traditional methods such as equal weighting, the Markowitz model, and the market index in both asset allocation and risk management. With a daily return of 0.18\% and a Sharpe ratio of 1.45, the event-based strategy provides higher risk-adjusted returns compared to the market index (0.05\% daily return, 0.95 Sharpe ratio) and equal weighting (0.15\%, 1.35). Additionally, the maximum drawdown for the event-based strategy is just 2.2\%, significantly lower than both the market index (5.0\%) and the Markowitz model (3.8\%), showcasing superior risk control in volatile market conditions.

% Moreover, the win rate of 60\% for the event-based strategy surpasses both the market index (52\%) and equal weighting (58\%), reflecting its ability to consistently make profitable decisions. The Markowitz model, with a negative daily return (-0.12\%) and a Sharpe ratio of -1.10, underperforms due to its difficulty in adapting to dynamic, event-driven market shocks. Overall, our event-based strategy offers a more effective and resilient approach to portfolio management by leveraging predictive insights from market events.
\begin{figure*}[h]
    \centering
    \includegraphics[width=0.99\linewidth]{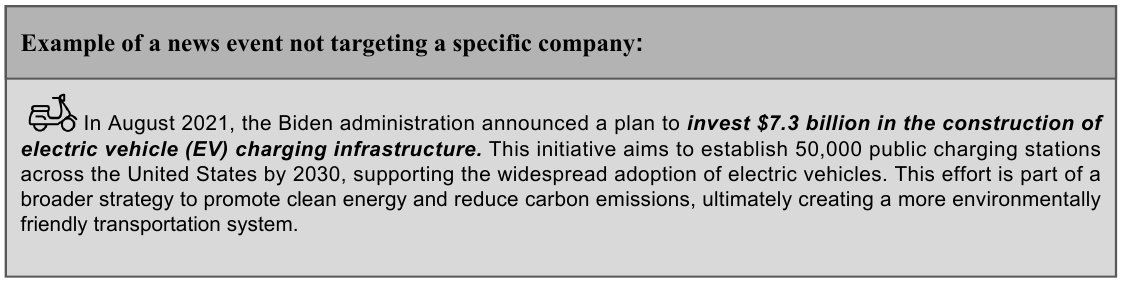}
    \vspace{-0.3em}
    \caption{An example where subgraph search is not applicable. As shown in the figure, this news event impacts the entire electric vehicle charging infrastructure industry rather than targeting a specific company.}
    \vspace{-0.5em}
    \label{fig:example}
\end{figure*}

\subsection{Other Analysis}

\paragraph{Knowledge Inject Analysis}
When effectively injecting KGs into LLMs, optimizing the model's understanding of market structures is paramount. One strategy involves using a preprocessing module to filter potential subgraphs as inputs. The simplest approach is to traverse one-hop and two-hop subgraphs related to a target company. While this method may be applicable in some contexts, it fails to capture the market's dynamic complexity, particularly in scenarios where events do not specifically target individual companies, such as those affecting entire supply chains.

Another strategy is to leverage RAG, which heavily relies on the performance of embedding models designed to recall companies that are ``semantically similar'' to specific queries. However, these embedding models often overlook the deeper market relationships associated with specific events when filtering for potentially impacted companies. This dependency can lead to significant misjudgments or biases in the model's event impact predictions. 

In contrast, the parameterization approach, which transforms KGs into adjustable parameters, provides a more comprehensive reflection of market trends and their complex interrelationships. 

\begin{figure}[h]
    \centering
    \includegraphics[width=0.95\columnwidth]{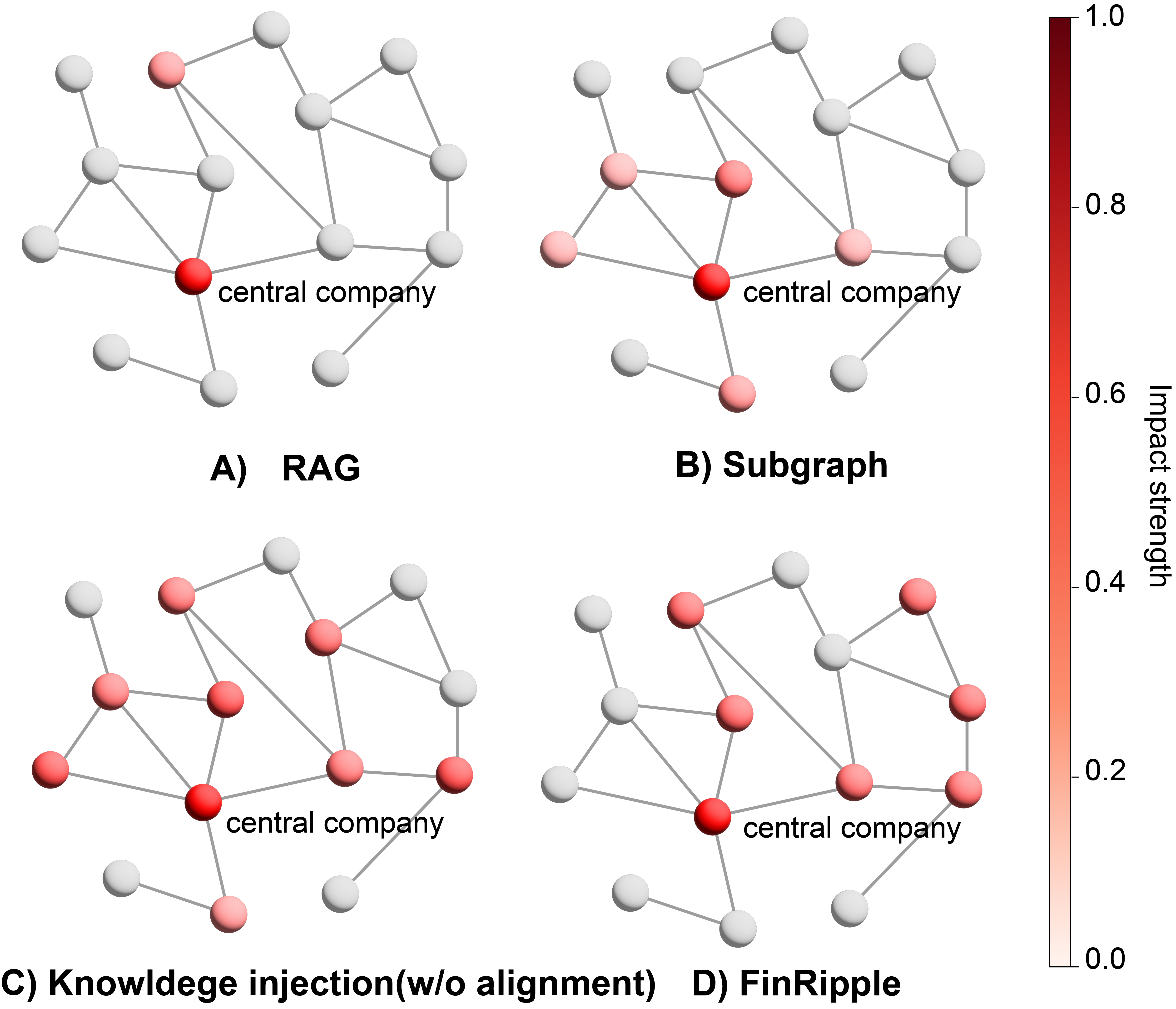}  % 插入图像
    \vspace{-0.2em}
    \caption{This diagram compares candidate companies identified by FinRipple and other methods. Due to the network's complexity, only selected nodes in the examples are shown for illustration purposes.}
    \vspace{-1em}
    \label{fig:wrapped}  % 图像标签
\end{figure}

This method enables dynamic adjustment and optimization of parameters during training, allowing the model to better capture the nonlinear dynamics of the market. By employing time-varying adapters, the model's adaptability to changes in market structure is enhanced, improving its responsiveness and predictive accuracy regarding market dynamic. For news events that focus on a specific central company, as Figure \ref{fig:wrapped} shows, RAG primarily retrieves based on semantic similarity, which often leads to a low recall rate when dealing with larger graphs. This limitation also affects first- and second-degree nodes, reducing the effectiveness of the retrieval process. Subgraph retrieval without alignment may select a larger number of relevant companies, but it often lacks the necessary logical structure to make meaningful predictions. FinRipple, by contrast, effectively captures not only the relationships among entities but also the logical pathways of impact from the central company, offering a more coherent and precise prediction of event impact. The clear propagation routes observed in FinRipple highlight its ability to model the cascading effects of an event through the network, accurately representing both direct and indirect influences. 

\begin{table}[h]
\centering
\resizebox{\columnwidth}{!}{%
\begin{tabular}{lccc}
\toprule
\textbf{Model} & \textbf{Zero-Shot} & \textbf{ICL} & \textbf{FinRipple} \\
\midrule
llama2-7b-chat & 0.41 $\pm$ 0.16 & 0.25 $\pm$ 0.09 & 0.21 $\pm$ 0.11 \\
llama2-13b-chat & 0.36 $\pm$ 0.18 & 0.13 $\pm$ 0.08 & 0.15 $\pm$ 0.09 \\
llama3-8b-instruct & 0.45 $\pm$ 0.19 & 0.11 $\pm$ 0.07 & 0.14 $\pm$ 0.08 \\
vicuna-7b-chat & 0.39 $\pm$ 0.14 & 0.22 $\pm$ 0.10 & 0.23 $\pm$ 0.05 \\
vicuna-13b-chat & 0.34 $\pm$ 0.15 & 0.13 $\pm$ 0.02 & 0.10 $\pm$ 0.04 \\
Phi-3.5-mini-instruct & 0.48 $\pm$ 0.21 & 0.31 $\pm$ 0.12 & 0.26 $\pm$ 0.09 \\
gemma-2-9b-it & 0.38 $\pm$ 0.17 & 0.23 $\pm$ 0.08 & 0.18 $\pm$ 0.06\\
\midrule
GPT 3.5 & 0.32 & 0.18  & / \\
GPT 4.0-preview & 0.14 & 0.10 & / \\
GPT 4o-mini & 0.12 & 0.09  & / \\
\bottomrule
\end{tabular}
}
\vspace{-0.5em}
\caption{Refusal-to-Answer Rate Comparison. The fluctuating values indicate variation under different temperature settings. This experiment is conducted on our benchmark, where refusal-to-answer samples are those that could not be post-processed into valid outputs.}
\vspace{-0.5em}
\label{tab:reject}
\end{table}

\begin{figure*}[h]
    \centering
    \includegraphics[width=0.96\textwidth]{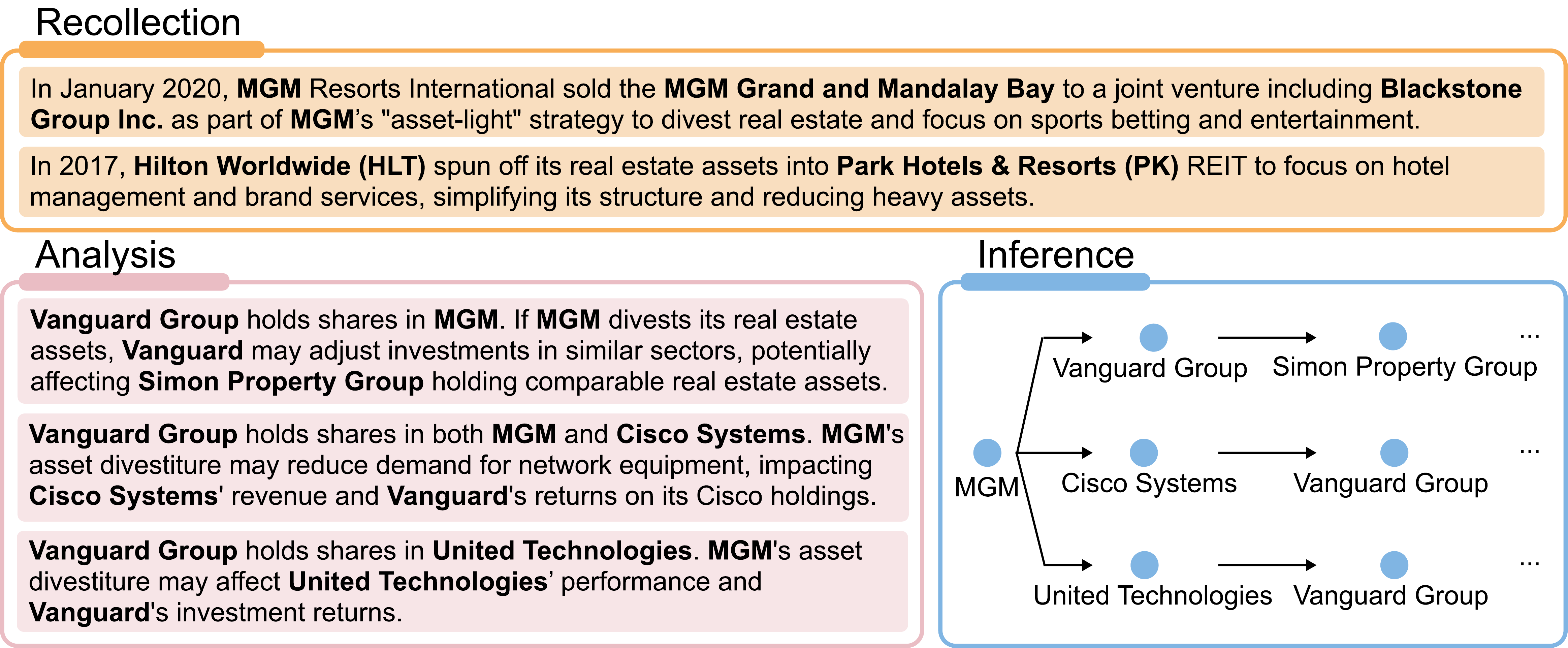}
    \vspace{-0.4em}
    \caption{Using CoT to analyze the reasoning process of vicuna-13b-chat. The model is aligned by FinRipple.}
    \vspace{-0.4em}
    \label{fig:case}
\end{figure*}

\paragraph{Refusal-to-Answer Rate Analysis:} In line with our experience, the refusal-to-answer rate largely depends on the model's instruction-following capability. As shown in Table~\ref{tab:reject}, zero-shot approaches exhibit systematically higher refusal rates with greater volatility across model architectures, reflecting fundamental limitations in interpreting complex domain-specific instructions. This pattern holds particularly for smaller open-source models, where instruction misinterpretation manifests as high variance in refusal behavior.

Closed-source architectures demonstrate superior instruction grounding, achieving sub-0.15 refusal rates through advanced comprehension capabilities. The FinRipple framework bridges this capability gap by transforming instruction semantics into market-dynamics-aware representations. Its effectiveness correlates with base model competency - stronger initial instruction following enables more precise financial alignment, as evidenced by order-of-magnitude improvements in compliant models.
\paragraph{Case study}
%In Recollection, past similar corporate events, such as asset sales or restructuring, are retrieved to provide context. The Analysis stage examines how these events could impact stakeholders like Vanguard Group and related companies, evaluating potential changes in investments and performance. Finally, the Inference section visually represents the impact pathways, demonstrating how a single event can propagate through a network of related entities, thereby influencing multiple stakeholders.
We believe that the logical reasoning capability of LLMs lies in their ability to establish connections with previously acquired knowledge or patterns. Therefore, in the inference process, we employ a straightforward Chain-of-Thought (CoT)~\citep{wei2022chain} approach to capture the intricate reasoning pathways, leading to the refined outcomes of FinRipple, as shown in Figure \ref{fig:case}. We can clearly observe that the inference process of the LLM, after being aligned with the financial market, is divided into three distinct steps: the first step involves establishing connections with past news, the second step focuses on analysis, and the third step derives the impact pathways. It is worth noting that not all news articles can directly establish connections with past knowledge. News that has undergone pre-training or supervised fine-tuning (SFT) is often more likely to be fully recalled and integrated into reasoning processes.

% \section{Model Analysis}
\section{Conclusion}
In conclusion, we present FinRipple, a novel training framework that empowers LLMs to analyze and predict the ripple effects of sudden events in financial markets.
By constructing a time-varying financial KG and integrating it into the LLM using adapters, we align the model with the dynamic market structure without retraining from scratch. Our rigorous validation showcases FinRipple's strong potential in real-world applications like asset pricing and portfolio construction.

\section{Limitations}
FinRipple also faces several limitations that warrant further attention. First, it relies heavily on high-quality, explicit KGs for effective performance. Without timely and high-quality data, the system’s capabilities may degrade over time, resulting in outdated knowledge and less relevant reasoning processes. At the same time, any delay in updating the KG may affect the relevance and accuracy of the generated outputs, especially during periods of rapid market fluctuations. Lastly, the scalability of LLMs when integrated with large-scale KGs remains a major concern~\cite{ibrahim2024survey}. As KGs grow in size, the computational burden of incorporating all relevant entities and relationships into LLMs increases substantially. Moving forward, efficient data management strategies and techniques such as model pruning will be essential to balance performance and computational cost effectively.

\section{Acknowledgements}
This work is supported by the Guangzhou-HKUST(GZ) Joint Funding Program (No. 2024A03J0630).
\bibliography{custom}

\appendix
\newpage
\appendix
\section{Theoretical Analysis}
\label{sec:problem_setting}
\subsection{Problem Setting}
Let $C = \{c_1, \ldots, c_n\}$ be a set of companies and $E^t = \{e_1^t, \ldots, e_m^t\}$ a set of events at time $t$. Given: True impact function: $f(c_i, e_j^t)$ for company $c_i$ and event $e_j^t$. Learnable model: $f_\theta(c_i, e_j^t)$ with parameters $\theta$.
We aim to minimize the empirical risk:
\[\label{eq:empirical_risk}
\hat{R}(\theta) = \frac{1}{n} \sum_{i=1}^n \biggl(\,\sum_{j=1}^m \bigl[f(c_i, e_j^t) - f_\theta(c_i, e_j^t)\bigr]\biggr)^{\!2},
\]
and bound the expected risk:
\[\label{eq:expected_risk}
R(\theta) = \mathbb{E}_{e \sim \mathcal{D}} \biggl[\,\frac{1}{n}\sum_{i=1}^n \bigl(f(c_i, e) - f_\theta(c_i, e)\bigr)^{\!2}\biggr].
\]

% Assumptions (统一区块样式)
\begin{assumption}[Sparsity]\label{assump:sparsity}
For all $j\in[m]$ and $i\in[n]$:
\begin{align*}
&\text{(Event Sparsity)}  \:\bigl|\{i \mid f(c_i, e_j^t) \neq 0\}\bigr| \leq k, \\
% \label{eq:event_sparsity} 
&\text{(Company Sparsity)} \:\bigl|\{j \mid f(c_i, e_j^t) \neq 0\}\bigr| \leq l.
% \label{eq:company_sparsity}
\end{align*}
\end{assumption}

\begin{assumption}[IID Sampling]\label{assump:iid}
The events $\{e_j^t\}_{j=1}^m$ are i.i.d. samples from the distribution $\mathcal{D}$.
\end{assumption}

% \begin{assumption}[Bounded Complexity]\label{assump:complexity}
% The Rademacher complexity of $\mathcal{F} = \{f_\theta : \theta \in \Theta\}$ satisfies:
% \begin{equation}\label{eq:rademacher}
% \mathcal{R}_m(\mathcal{F}) \leq \sqrt{\frac{kl \log n}{m}}.
% \end{equation}
% \end{assumption}
\begin{assumption}[Non Dominant Error]
\label{ass:no_dominant_error}
For all \(i = 1,\dots,n\) and \(j = 1,\dots,m\), we have
\[
a_{ij} \le H \left(\sum_{j=1}^m a_{ij}\right),
\]
where \(H>0\) is a given constant.
\end{assumption}
% Main Theorem (优化公式层次结构)
\subsection{Generalization Bound}
\begin{theorem}[Generalization Bound]\label{thm:main}
Under Assumptions \ref{assump:sparsity}-\ref{ass:no_dominant_error}, if $\hat{R}(\theta) \leq \frac{B}{n}$ for some $B>0$, then
\[\label{eq:main_bound}
R(\theta) \leq \frac{B}{n} + C\frac{kl}{\sqrt{m}},
\]
where $C>0$ is a universal constant independent of $n,m,k,l$.
\end{theorem}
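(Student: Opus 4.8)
The plan is to bound the expected risk by splitting it into the empirical risk plus a generalization gap, and then to control that gap by exploiting the two sparsity assumptions together with a concentration argument. First I would introduce the per-company, per-event nonnegative error term $a_{ij} = \bigl(f(c_i,e_j^t) - f_\theta(c_i,e_j^t)\bigr)^2$, so that both $\hat R(\theta)$ and $R(\theta)$ become averages (over companies, and over companies and the event distribution respectively) of sums of the $a_{ij}$. The key structural observation is that by Event Sparsity only $k$ companies have a nonzero error for a given event, and by Company Sparsity only $l$ events affect a given company; hence in the double sum $\sum_i \bigl(\sum_j [f - f_\theta]\bigr)^2$ only $O(kl)$ cross terms $[f-f_\theta](c_i,e_j^t)\,[f-f_\theta](c_i,e_{j'}^t)$ per company are nonzero. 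Combined with Assumption~\ref{ass:no_dominant_error}, which prevents any single term from dominating the row sum, I would use Cauchy--Schwarz to compare the ``sum-then-square'' empirical objective in~\eqref{eq:empirical_risk} with the ``square-then-sum'' quantity $\sum_{i,j} a_{ij}$ that controls the true risk in~\eqref{eq:expected_risk}, losing only a factor polynomial in $k$ and $l$.

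Next I would set up the concentration step. By IID Sampling (Assumption~\ref{assump:iid}), the empirical average over the $m$ observed events $e_1^t,\dots,e_m^t$ is an unbiased estimate of the expectation over $e\sim\mathcal D$ of the per-company squared error. I would define, for each company $c_i$, the random variable $g_i(e) = \bigl(f(c_i,e) - f_\theta(c_i,e)\bigr)^2$ and apply a uniform-deviation / Bernstein-type bound to $\frac1m\sum_{j=1}^m g_i(e_j^t)$ versus $\mathbb E_{e}[g_i(e)]$. The sparsity assumptions enter again here: they bound the effective number of companies with nonzero variance contribution and the variance of each $g_i$, so that summing the per-company deviations and dividing by $n$ yields a bound of order $kl/\sqrt m$ rather than something scaling with $n$. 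Averaging over $i$ and invoking the hypothesis $\hat R(\theta)\le B/n$ then gives $R(\theta)\le B/n + C\,kl/\sqrt m$ for a universal constant $C$, as claimed.

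The main obstacle I expect is reconciling the mismatch between the two risk functionals: the empirical risk in~\eqref{eq:empirical_risk} squares the \emph{sum} of signed residuals over events, whereas the expected risk in~\eqref{eq:expected_risk} is the expectation of a squared residual for a \emph{single} event. Passing cleanly between these — showing that a small ``sum-then-square'' empirical loss forces the individual squared residuals $a_{ij}$ to be small, and hence that $\mathbb E_e[g_i(e)]$ is small — is exactly where Assumption~\ref{ass:no_dominant_error} (no dominant error) must do the heavy lifting, since without it cancellation among signed residuals could keep the empirical objective small while individual errors remain large. I would handle this by first using the non-dominance bound to show $\sum_j a_{ij} \le H \sum_j \sum_{j'} (\text{cross terms})$-type control, i.e. that the row sum of squares is comparable to the squared row sum up to the factor $H$ and the sparsity count $l$, and only then feed the result into the concentration argument. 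Getting the exponents of $k$ and $l$ to collapse to the stated linear $kl$ (rather than, say, $k^2 l$ or $kl^2$) is the delicate bookkeeping step, and I would expect the constant $C$ to absorb $H$ and any logarithmic factors from the uniform bound.
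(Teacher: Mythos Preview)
Your overall architecture is right --- bridge the ``sum-then-square'' empirical objective to a sum of individual squared errors via Assumption~\ref{ass:no_dominant_error}, then control the generalization gap using i.i.d.\ sampling and sparsity --- and this is exactly the two-phase structure the paper follows. The execution differs in both phases, though. For the bridging step, the paper defines $a_{ij}$ as the \emph{signed} residual $f(c_i,e_j^t)-f_\theta(c_i,e_j^t)$ (not the square, as you do), and then Assumption~\ref{ass:no_dominant_error} gives the desired control in a single line: multiply $a_{ij}\le H\sum_{k}a_{ik}$ by $a_{ij}$ and sum over $i,j$ to obtain $\sum_{i,j}a_{ij}^2 \le H\sum_i\bigl(\sum_j a_{ij}\bigr)^2 = Hn\,\hat R(\theta)\le HB$. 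No Cauchy--Schwarz, no cross-term bookkeeping, no explicit $k,l$ factors enter here at all --- the sparsity constants are used only in the second phase. Your route through counting nonzero cross terms and Cauchy--Schwarz would also work, but it is more laborious and risks introducing an extra $l$ factor that you then have to cancel.

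For the concentration phase, the paper does not use per-company Bernstein bounds; instead it invokes a Rademacher-complexity argument on the loss class $\mathcal L=\{(e,c_i)\mapsto a_{ij}^2\}$ and asserts $\mathcal R_m(\mathcal L)\le\sqrt{kl\log n/m}$ directly from the sparsity structure, then plugs into the standard uniform bound $R(\theta)\le \hat R(\theta)+2\mathcal R_m(\mathcal L)+\sqrt{\log(1/\delta)/(2m)}$, with the log factors absorbed into $C$. Your Bernstein route would need a separate union bound or covering argument over $\theta$ to make the deviation uniform (since the $\theta$ satisfying $\hat R(\theta)\le B/n$ is data-dependent), which you acknowledge but do not specify; the Rademacher approach packages this uniformity automatically. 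Both strategies can reach the stated $kl/\sqrt m$ rate, but the paper's is shorter and avoids the exponent bookkeeping you flag as the delicate step.
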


% Proof (增强公式对齐和推导连贯性)
\begin{proof}
Define the per-instance error $a_{ij} := f(c_i, e_j^t) - f_\theta(c_i, e_j^t)$. The empirical risk becomes:
\begin{equation*}
\hat{R}(\theta) = \frac{1}{n}\sum_{i=1}^n \biggl(\,\sum_{j=1}^m a_{ij}\biggr)^{\!2} \leq \frac{B}{n}.
\end{equation*}
Combining Assumption~\ref{ass:no_dominant_error} and the above inequality, we have:
\[
\begin{aligned}
\sum_{i=1}^{n} \sum_{j=1}^{m} a_{ij}^2 &\leq H \sum_{i=1}^{n} \sum_{j=1}^{m} \left( a_{ij} \sum_{k=1}^{m} a_{ik} \right) \\
&= H \sum_{i=1}^{n} \left( \sum_{j=1}^{m} a_{ij} \right)^2 \leq HB.
\end{aligned}
\]
  
% By Cauchy-Schwarz inequality and company sparsity \eqref{eq:company_sparsity}, we derive:
% \begin{align}
% \biggl(\,\sum_{j=1}^m a_{ij}\biggr)^{\!2} 
%     \leq l\sum_{j=1}^m a_{ij}^2, \label{eq:cs_bound} 
% \end{align}
% then we can get,
% \begin{align}
%     \frac{1}{n}\sum_{i=1}^n l\sum_{j=1}^m a_{ij}^2 \leq \frac{B}{n}, \nonumber \text{hence } \sum_{i,j} a_{ij}^2 
%     \leq \frac{B}{l}. \label{eq:sum_sq_bound}
% \end{align}

% Using event sparsity \eqref{eq:event_sparsity}, let $S_j = \{i \mid f(c_i, e_j^t) \neq 0\}$ with $|S_j| \leq k$. This implies:
% \begin{align}
% \frac{1}{k}\sum_{i=1}^n a_{ij}^2 
%     = \frac{1}{k}\sum_{i\in S_j} a_{ij}^2\leq\max_{i,j} a_{ij}^2  \leq \frac{B}{kl}, \label{eq:event_bound} 
% \end{align}

Define the loss class $\mathcal{L} = \{(e,c_i) \mapsto a_{ij}^2 \mid \theta \in \Theta\}$. By the Rademacher complexity bound~\cite{rademacher_2008,rademacher_2019}:
\begin{equation*}
\mathcal{R}_m(\mathcal{L}) \leq \sqrt{\frac{kl \log n}{m}}.
\end{equation*}

Applying standard generalization bounds with probability $1-\delta$, we obtain:
\begin{align*}
R(\theta) 
    &\leq \hat{R}(\theta) + 2\mathcal{R}_m(\mathcal{L}) + \sqrt{\frac{\log(1/\delta)}{2m}} \nonumber \\
    &\leq \frac{B}{n} + 2\sqrt{\frac{kl \log n}{m}} + \sqrt{\frac{\log(1/\delta)}{2m}} \nonumber \\
    &\leq \frac{B}{n} + C\frac{kl}{\sqrt{m}}, \label{eq:final_bound}
\end{align*}
where constant $C$ absorbs all logarithmic factors and numerical constants.
\end{proof}

\section{Datasets Details}\label{sec:news_stat}
% \subsection{Datasets}
Data preparation is critical in ensuring the quality and relevance of the input information for our model. This phase is bifurcated into two primary components: the collection of news events and the construction of the time-varying financial KG.

% 当前新闻统计值版本24.9.19
\subsection{News Collection and Processing: } 
The original 792,684 news articles are sourced from Dow Jones News Services and the Wall Street Journal, and stored as structured XML files.
The structured dataset comprises eight variables, including \{Publication\_datetime, Publisher\_name, Region\_code, Company\_code, Title, Body, Word\_count, Action\}.
Detailed descriptions of these variables are provided in Table \ref{tab:variable}.
Using the `Company\_code' variable, we filtered out 129,753 news articles about individual S\&P 500 firms, covering the period from March 8, 2001, to October 30, 2023.
% Comprehensive statistics and details regarding the collected S\&P 500 news articles are presented in Appendix \ref{sec:news_stat}.
% The dataset comprises 129,753 news articles related to S\&P 500 companies, collected over the period from March 8, 2001, to October 30, 2023.
After removing the irrelevant variables, the remaining eight variables and their descriptions are detailed in Table \ref{tab:variable}.
Figure \ref{fig:stat} (A) illustrates the distribution of news articles over time.
Notably, only 2 articles were recorded in 2001, while the highest number of articles, 16,103, was collected in 2012.
The analysis of word counts reveals that the average number of words per news article is 5,443.85, with the maximum word count reaching 77,086 and the minimum at 23 words.
This variation indicates a wide range of article lengths, from brief news briefs to extensive, in-depth reports.
Figure \ref{fig:stat} (B) presents the top ten companies with the highest number of news articles in the dataset.
This ranking highlights the companies that receive the most media attention, which may be attributed to their market influence, recent activities, or significant corporate actions.
We further analyzed the properties of daily news based on the `Action' variable, as shown in Figure \ref{fig:stat} (C).
63.94\% of the news articles pertain to organizational adjustments, which include changes in the company's business strategy, personnel, or departmental structures. 
36.06\% of the news articles involve new initiatives, such as the establishment of new companies, launching new projects or services, hiring new executives, and introducing new product lines, etc.

\begin{table*}[h]
    \centering
    \resizebox{1.5\columnwidth}{!}{ 
    \begin{tabular}{c|p{10cm}}
    \toprule[1.8pt]
       Variable  & Description \\
       \midrule[1.2pt]
       Publication\_datetime & Date and time of news article publication. It records the exact date and time when the news article was officially published.\\ \hline
       Publisher\_name & Name of the news publisher. It indicates the media outlet or organization that published the news article.\\ \hline
       Region\_code & Geographical region code. It specifies the geographic location relevant to the company's operational area.\\ \hline
       Company\_code & Unique identifier or code for the relevant company. A unique code that identifies the company mentioned in the news.\\  \hline
       Title & Title of news article. A brief headline that summarizes the main topic or event described in the news article.\\ \hline
       Body & The detailed news content.\\ \hline
       Word\_count & Number of total word count in the body of the news article.\\ \hline
       Action & Type of corporate action mentioned in the news. Its value can be `rep' or `add'.\\
    \bottomrule[1.8pt]
    \end{tabular}
    }
    \caption{The variables in the collected news articles dataset.}
    \label{tab:variable}
\end{table*}

% We commence by retrieving representative events for each company.
% For instance, for Company A, we might collect significant events from a specific year, such as product launches, executive transitions, or major financial updates. To correlate these events with market reactions, we employ time-series anomaly detection techniques to identify abnormal fluctuations in stock prices, which are considered indicative of the market's response to specific events. In particular, if abnormal fluctuations in stock prices occur concurrently with an event, we infer that the companies experiencing these fluctuations are likely impacted by the event. Moreover, by retrospectively tracing the abnormal fluctuations and events at the same time point, we can effectively enrich and expand our dataset, ensuring accurate association between market dynamics and specific financial events.

\subsection{Knowledge Graph Construction: }\label{appendix:Knowledge Graph Construction}
\begin{figure*}[h]
    \centering
    \includegraphics[width=0.96\linewidth]{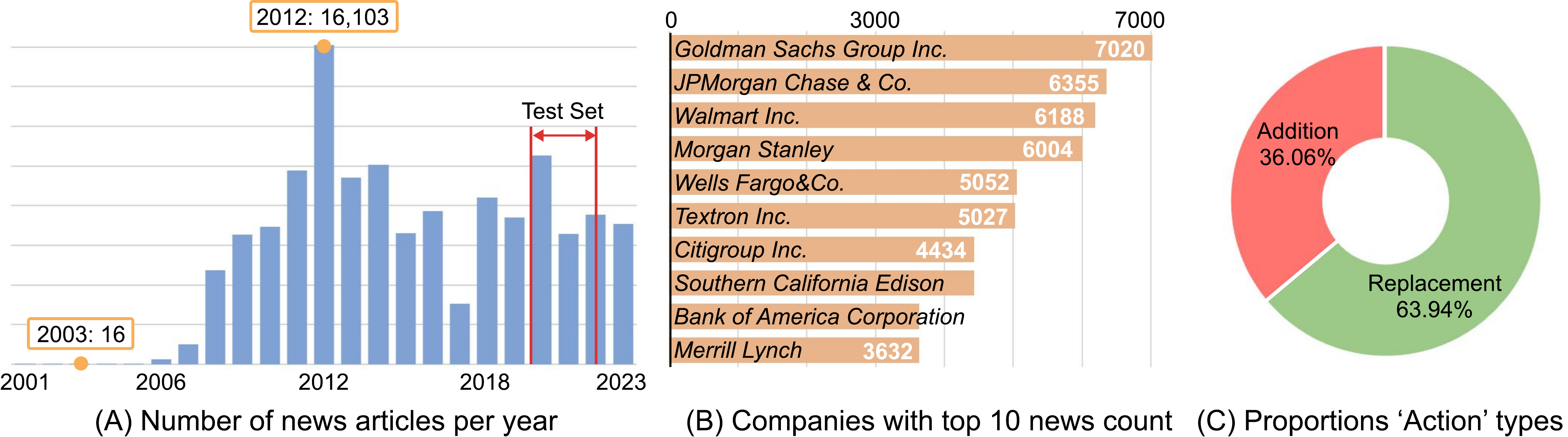}
    \caption{The statistics results of our collected news articles. (A) demonstrates the temporal distribution of news articles, (B) displays the company rankings with the top ten news counts, and (C) shows the properties of different corporate actions.}
    \label{fig:stat}
\end{figure*}

We constructed comprehensive financial KGs aimed at capturing the multifaceted interrelationships between companies and their potential impacts on profitability.
Each company is represented as a node, while the interrelationships between companies constitute the edges of the KGs.
To achieve this, we integrate various types of relationships derived from multiple data sources, ensuring a rich and nuanced representation of corporate interactions. 
\begin{itemize}
    \item \textbf{Technical Relevance Relationships.} We collect detailed and comprehensive information on firms' patents, including their corresponding Cooperative Patent Classification (CPC) codes, from the USPTO (United States Patent and Trademark Office) database to ensure a robust foundation for analyzing technical relevance and relationships between companies. Following the methodology outlined in~\cite{tech_2019}, we calculate pairwise technical closeness between two firms by measuring the correlation of CPC code distribution across their portfolios. An edge between two companies reflects their patent-based technical similarity. The strength of the edge is proportional to the degree of technical similarity, capturing the depth of their technological connections.

    \item \textbf{Supply Chain Relationships.} Information on firms' supply chains is extracted from the Compustat-Capital IQ database. Nodes represent companies, and edges indicate input-output relationships between companies. The strength of an edge is determined by the financial value of transactions between companies, providing a weighted representation of the intensity of their supply chain interactions.

    \item \textbf{Shared Leadership Relationships.} We obtain detailed information on firms' top leaders from the Boardex database. This data highlights interconnections between companies through shared executive affiliations. Edges denote the number of directors who simultaneously serve on the boards of two companies. This construction approach quantifies the degree of overlap in leadership structures, capturing the corporate governance ties between firms.

    \item \textbf{Mutual Fund Holding Relationships.} Data on mutual fund holdings of the listed U.S. firms is sourced from the Thomson/Refintiv database. Utilizing this information, we construct the holding-based relationships where an edge between two companies signifies that they are held by the same mutual fund. This relationship reflects the shared ownership structures and potential investment linkages among firms.
    
\end{itemize}
By extracting different types of relationships from these diverse data sources, we are able to construct a KG reflecting various dimensions of corporate interactions.
In the KG, each company and event is represented as a node, while the interrelationships between companies (such as collaborations or competitions) and the impact of events on companies constitute the edges of the graph.
% WRDS (Wharton Research Data Services).
% We obtain the information on the KG  from the 
% Specifically, we gather the information on firms' top leaders from the and construct the KG s on the network of firms' top leaders. 
% That is, we calculate the 
% Finally, we gather the data on  and we construct the supply chain-based KG  based on firms' input-output connection.
% We gather information from multiple data sources, including but not limited to market reports, industry analyses, and financial statements.

% We also conduct a detailed statistical analysis to support subsequent model training and evaluation. 
In the process of constructing the KG, we pay special attention to associations supported by empirical financial research, such as future technology linkages evidenced by patent data and upstream-downstream enterprise relationships.
This focus ensures that the KG not only documents the static relationships but also delves deeply into how these relationships influence company performance under varying market conditions and in response to specific events.
The resulting KG  provides a comprehensive understanding of the interactions among S\&P 500 companies and offers the framework a robust and comprehensive understanding foundation.

Our KG dataset is divided into training and testing sets.
The training set covers the period from March 2001 to December 2019 (226 months), and the testing set encompasses the period from January 2020 to June 2022 (30 months).
Table \ref{tab:graph_stats} presents detailed statistics for both the training and testing KGs.
It includes the number of contained graphs, the average number of nodes per graph, the average number of edges per graph, and the distribution of relationship multiplicities between nodes.

\begin{table*}[h]
\centering
\resizebox{\textwidth}{!}{%
\begin{tabular}{lcccccc}
\toprule
% Header Row
& Graphs & \makecell{Avg. Nodes\\per Graph} & \makecell{Avg. Edges\\per Graph} & \makecell{Single\\Relationship (\%)} & \makecell{Dual\\Relationships (\%)} & \makecell{Triple\\Relationships (\%)} \\
\midrule
% Data Rows
Training set & 226 & 6621.6018 & 13,844,186 & 92.7923 & 7.1956 & 0.0104 \\
Testing set & 30 & 6452.1667 & 14,228,088 & 95.0923 & 4.9007 & 0.0053 \\
\bottomrule
\end{tabular}
}
\caption{KG Data Statistics}
\label{tab:graph_stats}
\end{table*}

\section{FinRipple Details}
\label{tag:details}
\subsection{The detailed pipeline of FinRipple}
The training pipeline of FinRipple is detailed in Algorithm~\ref{alg:ppo_training_inference}.
\begin{algorithm*}[h]
\caption{Training Pipeline of FinRipple}
\label{alg:ppo_training_inference}
\begin{algorithmic}[1]
\Statex \hspace*{-\algorithmicindent} \textbf{Training Process:}
\Statex \hspace*{-\algorithmicindent} \textbf{Input:} KG s $G^t = \{G^1, \dots, G^n\}$, News data $N^t = \{N^1, \dots, N^m\}$, Pretrained LLM backbone $f_\theta$, Adapters $g_\phi$
\Statex \hspace*{-\algorithmicindent} \textbf{Output:} Updated LLM backbone parameters $\theta^\ast$
    % \State \hrulefill
    \vspace{1.4mm}
    \hrule
    \vspace{1.4mm}
    \For{each time step $t$}
        \State Initialize an empty set $I=\{\}$, collect the KG  $G^t = \{C^t, R^t\}$ and news data $N^t = \{n_1^t, \dots, n_m^t\}$.
        \For{each article $n_j^t \in N^t$}
            \State Inject the corresponding KG  $G^t$ into the adapter $g_\phi$:
            \Statex $$g_\phi^{t} \leftarrow g_\phi(G^t),  f^{\phi}_\theta = g_\phi^{t} + f_\theta$$
            % \[
            % g_\phi^{t} \leftarrow g_\phi(G^t), \quad f^{\phi}_\theta = g_\phi^{t} + f_\theta
            % \]
            \State Inference the impact $Y_{ij}^{t+\Delta t}$ based on $n_j^t$:
            \Statex $$Y_{ij}^{t+\Delta t} \leftarrow f^{\phi}_\theta (n_j^t),  I \leftarrow I \cup Y_{ij}^t$$
            % \[
            % Y_{ij}^{t+\Delta t} \leftarrow f^{\phi}_\theta (n_j^t), \quad I \leftarrow I \cup Y_{ij}^t
            % \]
            \State Compute the CAPM residuals:
            \Statex $$\epsilon_i^{t+\Delta t} = R_i^{t+\Delta t} - E(R_i^{t+\Delta t}),  E(R_i^{t+\Delta t}) = R_f + \beta_i (R_m^{t+\Delta t} - R_f)$$
            % \[
            % \epsilon_i^t = R_i^t - E(R_i^t), \quad E(R_i^t) = R_f + \beta_i (R_m^t - R_f)
            % \]
            \State Calculate the reward at time $t$:
            \Statex $$R(Z^{t+\Delta t}, \epsilon^{t+\Delta t}) = \frac{Z^{t+\Delta t} \cdot \epsilon^{t+\Delta t}}{\|Z^{t+\Delta t}\| \|\epsilon^{t+\Delta t}\|} + \lambda \frac{\sum_i \min(Z^{t+\Delta t}_i, \epsilon^{t+\Delta t}_i)}{\|\epsilon^{t+\Delta t}\|_1} \quad \text{where } Z_j^{t+\Delta t} = \sum_{i=1}^{n} Y_{ij}^{t+\Delta t}$$
            % \[
            % R^t(Z^t, \epsilon^t) = \frac{Z^t \cdot \epsilon^t}{\|Z^t\| \|\epsilon^t\|} \quad \text{where } Z_j^t = \sum_{i=1}^{n} Y_{ij}^t
            % \]
        \EndFor
        \State Update $\theta$ based on accumulated rewards.
\Statex
\[
\theta \leftarrow \theta + \alpha \mathbb{E}_t \left[ \nabla_\theta \log f^{\phi}_\theta(a_t | n_j^t) \frac{f^{\phi}_\theta(a_t | n_j^t)}{f^{\phi}_{\theta_{\text{old}}}(a_t | n_j^t)} \hat{A}_t \right] \quad \text{where } \hat{A}_t = R^t - V(n_j^t)
\]

        % \[
        % \theta \leftarrow \theta + \alpha \mathbb{E}_t \left[ \nabla_\theta \log f^{\phi}_\theta(a_t | n_j^t ) \hat{A}_t \right] \quad \text{where } \hat{A}_t = R^t - V(n_j^t)
        % \]
    \EndFor
    \vspace{1.4mm}
    \hrule
    \vspace{1.4mm}
    \Statex \hspace*{-\algorithmicindent} \textbf{Inference Process:}
    \setcounter{ALG@line}{0}
    \Statex \hspace*{-\algorithmicindent} \textbf{Input:} new event $e_{new}$ and the corresponding KG  $G^{t_{new}}$.
    \vspace{1.4mm}
    \hrule
    \vspace{1.4mm}
    \State Inject $G^{t_{new}}$ into the frozen adapter $g_\phi$:
    \Statex $$g_\phi \leftarrow g_\phi(G^{t_{new}})$$
    % \[
    % g_\phi \leftarrow g_\phi(G^{t_{new}})
    % \]
    \State Use the fine-tuned LLM backbone $f_{\theta^\ast}$ to predict the impact of the new event:
    \Statex $Y^t = f_{\theta^\ast}(G^{t_{new}}, e_{new})$
    % \[
    % Y^t = f_{\theta^\ast}(G^{t_{new}}, e_{new})
    % \]
    where $Y^t$ represents the predicted impact of $e_{new}$ on the companies $C^t$.
    \State Output the predicted impact matrix $Y^t$.
\end{algorithmic}
\end{algorithm*}

\subsection{The Prompts used in FinRipple}
The following is a detailed prompt designed in FinRipple to guide the LLM for financial event analysis. The LLM is instructed to evaluate the impact of news on companies and provide a structured output. The news report will be placed in the ``[INSERT MARKET NEWS REPORT]'' section. The LLM is expected to determine the affected companies, classify the impact type, and assign an impact score from -10 to +10. A high positive or negative score indicates the strength of the potential market effect. The output should include specific company names, detailed descriptions and adhere strictly to the given format for consistency and clarity. An example is provided within the prompt to illustrate the expected response.

\noindent
\textbf{Instruction:}
\begin{lstlisting}
You are a financial event analyst focused on analyzing the potential impacts of news reports on the market. Based on the given news content and current market structure, evaluate and output the affected companies, the type of impact (positive, negative, or neutral), and a score representing the strength of the impact (ranging from -10 to +10, where -10 indicates a very negative impact, and +10 indicates a very positive impact). Provide specific company names and event descriptions for clarity and utility. Here is an example.
\end{lstlisting}

\noindent
\textbf{Input Example:}
\begin{lstlisting}
"Company A announces a partnership with Company B to jointly develop new technology, expected to significantly enhance production efficiency and increase market share."
\end{lstlisting}

\noindent
\textbf{Output Format Example:}
\begin{lstlisting}
{
  "impact_analysis": {
    "affected_companies": [
      {
        "name": "Company A",
        "impact_type": "positive",
        "impact_score": 8
      },
      {
        "name": "Company B",
        "impact_type": "positive",
        "impact_score": 6
      }
    ],
    "analysis": "The partnership between Company A and Company B is expected to enhance their technological capabilities and market competitiveness, likely increasing their revenues and stock prices. 
  }
}
\end{lstlisting}

\noindent
\textbf{Input (you need to analyze):}
\begin{lstlisting}
[INSERT MARKET NEWS REPORT]
\end{lstlisting}

\noindent
\textbf{Provide your result, strictly following the output format in the Example, without any additional output.}

\section{Asset Pricing Models}
Asset pricing models are essential tools in finance for understanding the relationship between risk and expected return. This appendix briefly introduces three prominent models: CAPM, Fama-French Three-Factor Model (Fama3), and Fama-French Five-Factor Model (Fama5).

\subsection{Capital Asset Pricing Model }

The CAPM describes the relationship between systematic risk and expected return. The expected return of an asset is proportional to its beta, which measures the sensitivity of the asset's returns to market returns. The formula for CAPM is:
\[
E(R_i) = R_f + \beta_i \left( E(R_m) - R_f \right),
\label{eq_a1}
\]
where \( E(R_i) \) represents the asset's expected return, \( R_f \) is the risk-free rate, \( \beta_i \) is the asset's beta that measures its sensitivity to market movements, and \( E(R_m) \) is the expected return of the market.

\subsection{Fama-French Three-Factor Model }

The Fama3 expands upon CAPM by including two additional factors: size and value. The size premium, denoted as Small Minus Big (SMB), captures the excess return of small-cap stocks over large-cap stocks, while the value premium, denoted as High Minus Low (HML), captures the excess return of high book-to-market stocks over low book-to-market stocks. The model is represented as:
\[
\begin{aligned}
E(R_i) = R_f &+ \beta_i \left( E(R_m) - R_f \right) \\ &+ s \times \text{SMB} + h \times \text{HML},
\label{eq_a2}
\end{aligned}
\]
where \( s \) and \( h \) represent the sensitivities of the asset's returns to the SMB and HML factors, respectively.

\subsection{Fama-French Five-Factor Model}

The Fama5 extends Fama3 by adding two more factors: profitability and investment. The profitability premium, denoted as Robust Minus Weak (RMW), captures the excess return of firms with high profitability over those with low profitability. The investment premium, denoted as Conservative Minus Aggressive (CMA), captures the excess return of firms with conservative investment policies over those with aggressive policies. The updated model is:
\[
\begin{aligned}
E(R_i) = R_f &+ \beta_i \left( E(R_m) - R_f \right) \\ &+ s \times \text{SMB} + h \times \text{HML} \\ &+ r \times \text{RMW} + c \times \text{CMA},
\label{eq_a3}
\end{aligned}
\]
where \( r \) and \( c \) represent the sensitivities to the RMW and CMA factors, respectively.

\subsection{Residuals and Market Anomalies}

Residuals of these models represent the portion of an asset's return not captured by the included risk factors. By analyzing residuals, investors can identify abnormal returns that the models fail to explain. These anomalies often arise due to market inefficiencies, information asymmetries, or other idiosyncratic risks not accounted for by the systematic factors in the models. Understanding residuals helps investors gain insights into potential mispricing and hidden variables in the market, revealing opportunities or risks that standard models overlook.
\begin{table}[htbp]
    \centering
    \resizebox{\columnwidth}{!}{%
    \begin{tabular}{l|c|c|c|c}
        \toprule
        Model & All & w/o RQ & w/o FJ & w/o FQ \\
        \midrule
        Gemma-2b-it & 84.6\% & 38.5\% & 15.4\% & 30.8\% \\
        Gemma-7b-it & 69.2\% & 30.8\% & 46.2\% & 46.2\% \\
        Llama-13b-chat & 61.5\% & 7.7\% & 15.4\% & 23.1\% \\
        \bottomrule
    \end{tabular}
    }
    \caption{Ablation study results for the three classes of questions: Retrieval Questions (RQ), Factual Judgments (FJ) and Factual Questions (FQ). The above results are averaged over five shuffles of the subgraph.}
    \label{tab:ablation_study}
\end{table}
% \newpage
\section{Other Experimental Results}\label{appendix:other_exp}

\subsection{The Accuracy of KG injection}
\label{appendix:Instruction Construction}
\begin{table*}[htbp]
\centering
\resizebox{1.8\columnwidth}{!}{  % 自动适配页面宽度
\begin{tabular}{c|p{15cm}}
\toprule
    Problem Classification & Typical Questions \\
    \midrule
    \multirow{5}*{\makecell[c]{Retrieval\\Questions}} 
    & ``Which companies have a common CEO relationship with \{\}?'' \\
    & ``Which companies have an upstream-downstream relationship with \{\}?'' \\
    & ``Which companies have multiple relationships with \{\}?'' \\
    & ``Which companies have one relationship with \{\}?'' \\
    & ``Which companies have one relationship with \{\}?'' \\
    \midrule
    \multirow{3}*{\makecell[c]{Factual\\Judgments}} 
    & ``Are there supply chain upstream and downstream transactions between \{\} and \{\}?'' \\
    & ``Are the companies \{\} and \{\} held by the same fund?'' \\
    & ``Are the companies \{\} and \{\} held by the same fund?'' \\
    \midrule
    \multirow{3}*{\makecell[c]{Factual\\Questions}} 
    & ``What is the relationship between \{\} and \{\}?'' \\
    & ``What is the technical similarity between \{\} and \{\}?'' \\
    & ``What is the technical similarity score between \{\} and \{\}?'' \\
\bottomrule
\end{tabular}
}
\caption{The three classes of instruction questions generated from KGs.}
\label{tab:constrction_q}
\end{table*}

% \end{wraptable}
We used a random subgraph of 100 nodes for training, with an 8:2 split between the training and testing datasets. The results indicate that all three types of questions are beneficial. Note that some questions may not be answered correctly if the information needed is not fully covered by the training set. If all information is covered, our tests show that the adapter's memory accuracy reaches approximately 90\%.
We constructed three types of questions by traversing the KG, as shown in Table \ref{tab:constrction_q}. The first category, Retrieval Questions, focuses on identifying specific relationships between companies, such as shared CEOs or upstream-downstream connections. The second category, Factual Judgments, is used to determine whether certain relationships exist, such as common fund holdings or supply chain transactions. Finally, the third category, Factual Questions, aims to explore the details of relationships between entities, such as the nature of technical similarities or similarity scores.

\subsection{Evalidation on Other Asset Pricing Models}
In this subsection, we also evaluate FinRipple's ability to explain the residuals of other models including Fama3 and Fama5. Based on our experimental findings, as shown in Table~\ref{tab:fama3} and Table~\ref{tab:fama5}, we observe that the explanatory difficulty of Fama3 and Fama5 residuals gradually decreases. This reduction is primarily due to the stepwise exclusion of interfering factors from the residuals. The contributions of different variables were compared using standardized regression coefficients, as shown in Figure~\ref{fig:fama}. The results reveal that these factors exhibit distinct cyclical patterns. To account for these dynamics, we constructed training objectives based on the more challenging CAPM model. Although this approach increases the optimization difficulty, it ensures stable performance even when certain factors become less effective.
\subsection{Ablation Study on Graph Relationships}
We examined how different types of relationships in the graph affect prediction results. Removing any of the three relationship types—technical relevance, supply chain, or shared leadership—led to a drop in performance. As shown in Table~\ref{tab:rela}, among them, removing supply chain relationships caused the largest decline in explanatory power. The full model performed best and showed a statistically significant effect. This suggests that all three types of relationships are useful, with supply chain links being especially important.
\section{Baselines Details}
\label{baselines}
\subsection{Asset Pricing}
\subsubsection{Zero Shot}
Zero-shot inference enables the model to analyze a wider range of market scenarios without relying on specific examples. The prompt used is shown as follows:
\noindent
\textbf{Instruction:}
\begin{lstlisting}
You are a financial event analyst focused on analyzing the potential impacts of news reports on the market. Based on the given news content and current market structure, evaluate and output the affected companies (TICKER in SP500), the type of impact (positive, negative, or neutral), and a score representing the strength of the impact (ranging from -10 to +10, where -10 indicates a very negative impact and +10 indicates a very positive impact). Provide specific company names and event descriptions for clarity and utility. A market news report, company's knowledge graph information, specific requirements and output format will be provided below.
\end{lstlisting}
\noindent
\textbf{Market news report:}
\begin{lstlisting}
[INSERT MARKET NEWS REPORT]
\end{lstlisting}

\noindent
\textbf{Knowledge Graph (current market structure you can refer to):}
\begin{lstlisting}
[INSERT KNOWLEDGE GRAPH]
\end{lstlisting}

\noindent
\textbf{Requirement:}
\begin{lstlisting}
"Provide your result, strictly following the output format below, without any additional output."  
\end{lstlisting}

\noindent
\textbf{Output Format:}
\begin{lstlisting}
"Please provide your response in a structured JSON format. The JSON should have a top-level object with a single key 'impact_analysis'. The value of 'impact_analysis' should be an object containing two keys: 'affected_companies': An array of objects: 'name': The company's name (string) 'impact_type': The type of impact, e.g. 'positive' or 'negative' (string) 'impact_score': A numerical score representing the impact (integer) 'analysis': A string containing a brief analysis of the overall impact. Please ensure that the JSON is properly formatted and uses double quotes for strings.

Here's an example of how the structure should look:
{
    'impact_analysis': {
    'affected_companies': [
        {
        'name': 'Company Name',
        'impact_type': 'impact type',
        'impact_score': score
        },
        ...
    ],
    'analysis': 'Your analysis text here.'
    }
}"
\end{lstlisting}
\begin{table*}[h]
\centering
\resizebox{\textwidth}{!}{%
  \begin{tabular}{l ccc ccc ccc ccc ccc}
  \toprule
  \multirow{2}{*}{\textbf{Model}} & \multicolumn{3}{c}{\textbf{RAG}} & \multicolumn{3}{c}{\textbf{Zero-Shot}} & \multicolumn{3}{c}{\textbf{ICL}} & \multicolumn{3}{c}{\textbf{FinRipple/w-o alignment}} & \multicolumn{3}{c}{FinRipple} \\
  \cmidrule(lr){2-4} \cmidrule(lr){5-7} \cmidrule(lr){8-10} \cmidrule(lr){11-13} \cmidrule(lr){14-16}
   & \textbf{Coef.} & \textbf{p-value} & \textbf{R$^2$} & \textbf{Coef.} & \textbf{p-value} & \textbf{R$^2$} & \textbf{Coef.} & \textbf{p-value} & \textbf{R$^2$} & \textbf{Coef.} & \textbf{p-value} & \textbf{R$^2$} & \textbf{Coef.} & \textbf{p-value} & \textbf{R$^2$} \\
  \midrule
  Llama2-7b-chat & 0.021& 0.482 & 0.013 & 0.040 & 0.657 & 0.021 & 0.058& 0.287 & 0.145 & 0.090 & 0.520 & 0.152 & 0.310* & 0.021 & 0.275 \\
  Llama2-13b-chat & 0.132 & 0.405 & 0.074 & 0.095 & 0.445 & 0.065 & 0.158& 0.245 & 0.138 & 0.182& 0.314 & 0.195 & 0.445*& 0.013 & 0.390 \\
  Llama3-8b-instruct & 0.102 & 0.365 & 0.051 & 0.067 & 0.380 & 0.030 & 0.088 & 0.370 & 0.099 & 0.211 & 0.402 & 0.178 & 0.370& 0.007 & 0.400 \\
  vicuna-7b-chat & 0.158 & 0.235 & 0.095 & 0.112 & 0.400 & 0.078 & 0.215& 0.142& 0.134 & 0.250 & 0.188 & 0.256 & 0.515*** & 0.001 & 0.485 \\
  vicuna-13b-chat & 0.505** & 0.028* & 0.145 & 0.172& 0.210 & 0.123 & 0.290*& 0.031 & 0.255 & 0.365& 0.175 & 0.342 & 0.610*** & 0.001 & 0.550 \\
  Phi-3.5-mini-instruct & 0.097 & 0.512 & 0.032 & 0.056 & 0.670 & 0.026 & 0.075 & 0.470 & 0.086 & 0.153& 0.395 & 0.202 & 0.285** & 0.005 & 0.335 \\
  Gemma-2-9b-it & 0.112 & 0.298 & 0.061 & 0.089 & 0.423 & 0.047 & 0.178& 0.285 & 0.144 & 0.265 & 0.305 & 0.330 & 0.395*** & 0.001 & 0.445 \\
  \midrule
  GPT 3.5 & 0.060 & 0.455 & 0.018 & 0.045 & 0.550 & 0.039 & 0.069*& 0.018 & 0.106 & / & / & / & / & / & / \\
  GPT 4.0-preview & 0.165 & 0.328 & 0.045 & 0.119 & 0.389 & 0.063 & 0.195& 0.512& 0.138 & / & / & / & / & / & / \\
  GPT 4o-mini & 0.198& 0.215 & 0.051 & 0.145 & 0.312 & 0.055 & 0.155& 0.209& 0.121 & / & / & / & / & / & / \\
  \bottomrule
  \end{tabular}
}
\caption{Differences in the explanatory power of Fama3 residuals by baselines and FinRipple applied to LLMs. Significance levels: * \( p < 0.05 \), ** \( p < 0.01 \), *** \( p < 0.001 \). Cells with `/` indicate unavailable model parameters.}
\label{tab:fama3}
\end{table*}

\begin{table*}[h]
\centering
\resizebox{\textwidth}{!}{%
  \begin{tabular}{l ccc ccc ccc ccc ccc}
  \toprule
  \multirow{2}{*}{\textbf{Model}} & \multicolumn{3}{c}{\textbf{RAG}} & \multicolumn{3}{c}{\textbf{Zero-Shot}} & \multicolumn{3}{c}{\textbf{ICL}} & \multicolumn{3}{c}{\textbf{FinRipple/w-o alignment}} & \multicolumn{3}{c}{FinRipple} \\
  \cmidrule(lr){2-4} \cmidrule(lr){5-7} \cmidrule(lr){8-10} \cmidrule(lr){11-13} \cmidrule(lr){14-16}
   & \textbf{Coef.} & \textbf{p-value} & \textbf{R$^2$} & \textbf{Coef.} & \textbf{p-value} & \textbf{R$^2$} & \textbf{Coef.} & \textbf{p-value} & \textbf{R$^2$} & \textbf{Coef.} & \textbf{p-value} & \textbf{R$^2$} & \textbf{Coef.} & \textbf{p-value} & \textbf{R$^2$} \\
  \midrule
  Llama2-7b-chat & 0.018 & 0.489 & 0.014 & 0.042 & 0.670 & 0.025 & 0.078 & 0.260 & 0.152 & 0.127 & 0.445& 0.185 & 0.345** & 0.007 & 0.300 \\
  Llama2-13b-chat & 0.155* & 0.039 & 0.082 & 0.091 & 0.435 & 0.068 & 0.180& 0.428& 0.150 & 0.225& 0.309& 0.220 & 0.500*** & 0.001 & 0.420 \\
  Llama3-8b-instruct & 0.112 & 0.368 & 0.059 & 0.075 & 0.385 & 0.034 & 0.103 & 0.330 & 0.109 & 0.265& 0.306& 0.205 & 0.405*** & 0.001& 0.440 \\
  vicuna-7b-chat & 0.170* & 0.021 & 0.101 & 0.125 & 0.370 & 0.087 & 0.250& 0.303& 0.145 & 0.288& 0.107& 0.280 & 0.565*** & 0.001& 0.525 \\
  vicuna-13b-chat & 0.540** & 0.010 & 0.160 & 0.190* & 0.042 & 0.148 & 0.320& 0.315& 0.260 & 0.420& 0.111& 0.375 & 0.655*** & 0.000& 0.590 \\
  Phi-3.5-mini-instruct & 0.105 & 0.495 & 0.038 & 0.050 & 0.690 & 0.032 & 0.090 & 0.460 & 0.095 & 0.185& 0.422& 0.230 & 0.330** & 0.004 & 0.360 \\
  Gemma-2-9b-it & 0.140* & 0.028 & 0.068 & 0.087 & 0.425 & 0.048 & 0.205& 0.727& 0.155 & 0.305& 0.267& 0.360 & 0.430*** & 0.001& 0.485 \\
  \midrule
  GPT 3.5 & 0.070 & 0.435 & 0.023 & 0.038 & 0.585 & 0.039 & 0.085& 0.322& 0.120 & / & / & / & / & / & / \\
  GPT 4.0-preview & 0.180* & 0.031 & 0.050 & 0.125 & 0.390 & 0.062 & 0.220& 0.606& 0.150 & / & / & / & / & / & / \\
  GPT 4o-mini & 0.205& 0.629& 0.058 & 0.145 & 0.315 & 0.061 & 0.175& 0.703& 0.135 & / & / & / & / & / & / \\
  \bottomrule
  \end{tabular}
}
\caption{Differences in the explanatory power of Fama5 residuals by baselines and FinRipple applied to LLMs. Significance levels: * \( p < 0.05 \), ** \( p < 0.01 \), *** \( p < 0.001 \). Cells with `/` indicate unavailable model parameters.}
\label{tab:fama5}
\end{table*}

\begin{figure*}[h]
    \centering
    \includegraphics[trim=0 3 0 5, clip,width=0.98\linewidth]{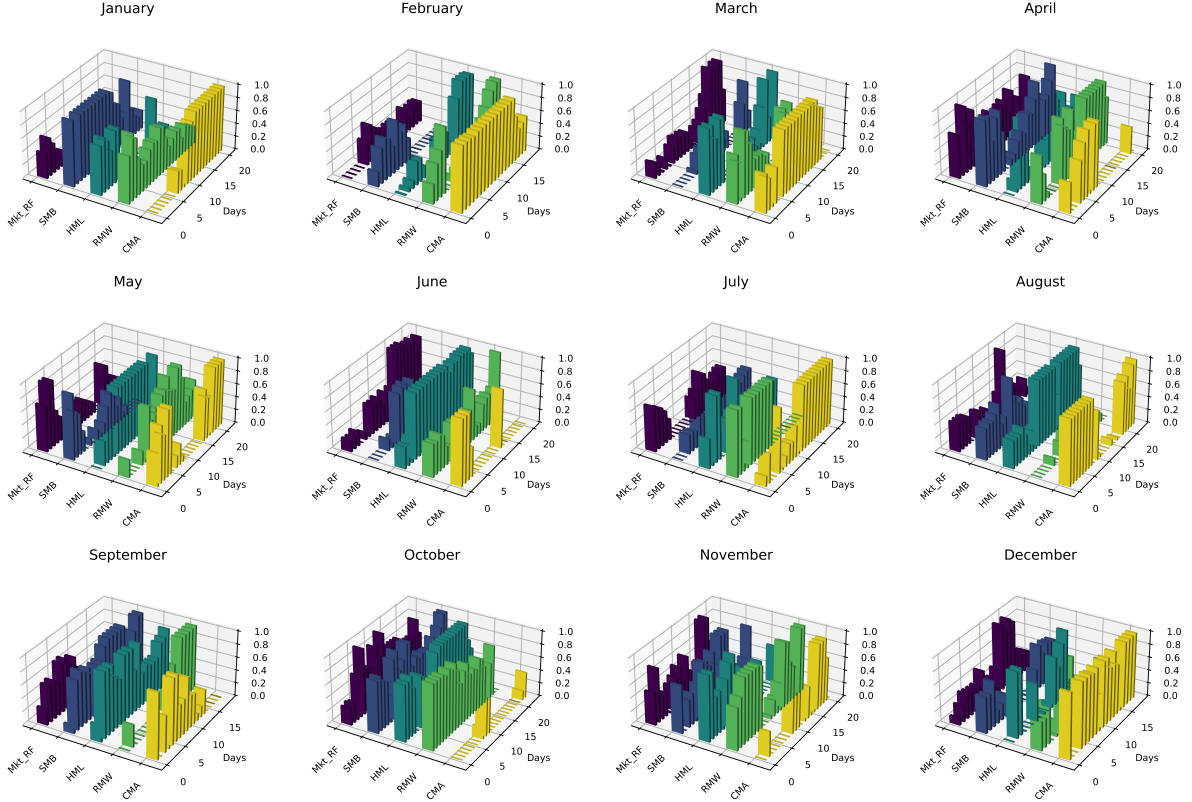}
    \caption{Variable importance of Fama-French 5 factors on 2018 returns.}
    \label{fig:fama}
\end{figure*}

\begin{table}[ht]
\centering

\resizebox{\columnwidth}{!}{%
\begin{tabular}{lccc}
\toprule
Model Variant & Coef. & p-value & $R^2$ \\
\midrule
Llama2-7b-chat & \textbf{0.150}$^*$ & \textbf{0.030} & \textbf{0.083} \\
w/o Technical Relevance & 0.111 & 0.120 & 0.067 \\
w/o Supply Chain & 0.084 & 0.261 & 0.052 \\
w/o Shared Leadership & 0.095 & 0.058 & 0.071 \\
\bottomrule
\end{tabular}%

}
\caption{Impact of Removing Graph Relationship Types on Prediction}
\label{tab:rela}
\end{table}

% \newpage

\subsubsection{RAG and ICL}
To effectively analyze financial events and their market impact, we employ an ICL baseline. This method provides the model with a concrete example, demonstrating the expected input format, analysis process, and output structure. By presenting a sample scenario and its corresponding analysis, we establish a clear framework for the model to follow. For the RAG method, we use text-embedding-ada-002 as our embedding model, with the same prompt template as used in ICL. The following prompt illustrates this few-shot learning technique:

\noindent
\textbf{Instruction:}
\begin{lstlisting}
You are a financial event analyst focused on analyzing the potential impacts of news reports on the market. Based on the given news content and current market structure, evaluate and output the affected companies (TICKER in SP500), the type of impact (positive, negative, or neutral), and a score representing the strength of the impact (ranging from -10 to +10, where -10 indicates a very negative impact, and +10 indicates a very positive impact). Provide specific company names and event descriptions for clarity and utility. Here is an example.
\end{lstlisting}

\noindent
\textbf{Input Example:}
\begin{lstlisting}
"Company A announces a partnership with Company B to jointly develop new technology, expected to significantly enhance production efficiency and increase market share."    
\end{lstlisting}

\noindent
\textbf{Output Format Example:}
\begin{lstlisting}
{
  "impact_analysis": {
    "affected_companies": [
      {
        "name": "Company A",
        "impact_type": "positive",
        "impact_score": 8
      },
      {
        "name": "Company B",
        "impact_type": "positive",
        "impact_score": 6
      }
    ],
    "analysis": "The partnership between Company A and Company B is 
    expected to enhance their technological capabilities and market 
    competitiveness, likely increasing their revenues and stock prices. 
  }
}    
\end{lstlisting}

\noindent
\textbf{Input (you need to analyze):}
\begin{lstlisting}
"Company A announces a partnership with Company B to jointly develop new technology, expected to significantly enhance production efficiency and increase market share."    
\end{lstlisting}

\noindent
\textbf{Knowledge Graph (current market structure you can refer to):}
\begin{lstlisting}
(Company A, Company B, supplier)
(Company C, Company D, subsidiary)
(Company E, Company F, competitor)
(Company G, Company H, partner)
(Company I, Company J, investor)
(Company Q, Company R, technology provider) ...    
\end{lstlisting}

\noindent
\textbf{Provide your result, strictly following the output format in the example, without any additional output.}

\subsection{Statistical Metrics}

This subsection introduces key statistical metrics used to evaluate the explanatory power of independent variables on the dependent variable, including Coefficient (Coef.), p-value, Coefficient of Determination (\( R^2 \)), ANOVA F-statistic (ANOVA-F), ANOVA p-value (ANOVA-p), and Effect Size (\( \eta^2 \)).

\paragraph{Coefficient (Coef.)}
The coefficient (\( \beta_i \)) represents the estimated effect of an independent variable \( X_i \) on the dependent variable \( Y \), holding all other variables constant. The regression equation is given by \( Y = \beta_0 + \beta_1 X_1 + \beta_2 X_2 + \cdots + \beta_n X_n + \epsilon \), where \( \epsilon \) is the error term.

\paragraph{p-value}
The p-value indicates the statistical significance of each coefficient, measuring the probability of observing the estimated effect under the null hypothesis that the coefficient is zero. A smaller p-value suggests stronger evidence against the null hypothesis.

\paragraph{Coefficient of Determination (\( R^2 \))}
The Coefficient of Determination (\( R^2 \)) measures the proportion of variance in the dependent variable that is explained by the independent variables. It is calculated as \( R^2 = 1 - \frac{\sum_{i=1}^{n} (y_i - \hat{y}_i)^2}{\sum_{i=1}^{n} (y_i - \bar{y})^2} \), where \( y_i \) is the observed value, \( \hat{y}_i \) is the predicted value, and \( \bar{y} \) is the mean of the observed values.

\paragraph{ANOVA F-statistic (ANOVA-F)}
The ANOVA F-statistic tests whether the regression model explains a significant proportion of variance in the dependent variable compared to a model with no predictors. It is calculated as \( F = \frac{\text{MS}_{\text{regression}}}{\text{MS}_{\text{residual}}} \), where \( \text{MS}_{\text{regression}} \) is the mean square due to regression, and \( \text{MS}_{\text{residual}} \) is the mean square due to residual error. Higher values of \( F \) suggest a better fit forr the model.

\paragraph{ANOVA p-value (ANOVA-p)}
The ANOVA p-value indicates the statistical significance of the F-statistic, reflecting the probability of obtaining the computed F-statistic under the null hypothesis that the regression model has no explanatory power.

\paragraph{Effect Size (\( \eta^2 \))}
Effect Size (\( \eta^2 \)) represents the proportion of the total variance in the dependent variable that is attributable to an independent variable or a set of independent variables. It is calculated as \( \eta^2 = \frac{\text{SS}_{\text{between}}}{\text{SS}_{\text{total}}} \), where \( \text{SS}_{\text{between}} \) is the sum of squares between groups, and \( \text{SS}_{\text{total}} \) is the total sum of squares. This metric helps determine the magnitude of the effect of the independent variables.
\subsection{Portfolio Management}

Portfolio management involves the selection and optimization of asset allocation to maximize the return within a given investment process (Hu and Lin, 2019). In this section, we describe the implementation details of five benchmark portfolio strategies: Equal Weighting, Volatility Weighting, Markowitz Model, Min-Variance Weighting, and FinRipple. These benchmarks are evaluated using metrics such as Daily Return (\( R_d \)), Sharpe Ratio (\( S_a \)), Maximum Drawdown (MDD), and Win Rate. In our experiments, we use historical data from the past 30 days as input. To simplify the comparison and ensure fairness, tax rates are set to zero across all scenarios. 

\subsubsection{Equal Weighting}
The Equal Weighting strategy assigns an equal weight to each asset in the portfolio:
\[
w_i = \frac{1}{N}, \quad i = 1, 2, \ldots, N
% \label{eq_a4}
\]
where \( w_i \) represents the weight of asset \( i \), and \( N \) is the total number of assets.

\subsubsection{Volatility Weighting}
The Volatility Weighting strategy allocates weights inversely proportional to the historical volatility of each asset:

\begin{equation}
w_i = \frac{\frac{1}{\sigma_i}}{\sum_{j=1}^{N} \frac{1}{\sigma_j}}, \quad i = 1, 2, \ldots, N
\label{eq_a5}
\end{equation}

where \( \sigma_i \) is the historical volatility (standard deviation) of asset \( i \).

\subsubsection{Markowitz Model}
The Markowitz Model, also known as the Mean-Variance Optimization Model, aims to maximize expected return for a given level of risk or minimize risk for a given expected return:
\[
\begin{aligned}
    &\max_{\mathbf{w}} \quad \mathbf{w}^T \mathbf{\mu} - \frac{\lambda}{2} \mathbf{w}^T \mathbf{\Sigma} \mathbf{w} \\
    &\text{s.t.} \quad \mathbf{1}^T \mathbf{w} = 1, \quad \mathbf{w} \geq 0
% \label{eq_a6}
\end{aligned}
\]
where \( \mathbf{w} \) is the vector of portfolio weights, \( \mathbf{\mu} \) is the expected return vector, \( \mathbf{\Sigma} \) is the covariance matrix of asset returns, and \( \lambda = 1 \) is the risk aversion parameter, representing a moderate balance between risk and return.

\subsubsection{Min-Variance Weighting}
The Min-Variance Weighting strategy seeks to construct a portfolio with the lowest overall risk:
\begin{align*}
    &\min_{\mathbf{w}} \quad \mathbf{w}^T \mathbf{\Sigma} \mathbf{w} \\
    &\text{s.t.} \quad \mathbf{1}^T \mathbf{w} = 1, \quad \mathbf{w} \geq 0
\label{eq_a7}
\end{align*}
where \( \mathbf{\Sigma} \) is the covariance matrix of asset returns.

\subsection{Metrics of Portofolio Management}

The benchmarks are evaluated using the following metrics:

\paragraph{Daily Return (\(R_d\))}
The daily return measures the return of an asset over one day, calculated as \( R_d = \frac{P_t - P_{t-1}}{P_{t-1}} \), where \( P_t \) is the asset price at time \( t \), and \( P_{t-1} \) is the price on the previous trading day.

\paragraph{Sharpe Ratio (\(S_a\))}
The Sharpe ratio measures investment performance compared to a risk-free asset, adjusted for risk, using the formula \( S_a = \frac{\bar{R}_a - R_f}{\sigma_a} \), where \( \bar{R}_a \) is the average annual return, \( R_f \) is the risk-free rate, and \( \sigma_a \) is the standard deviation of the return.

\paragraph{Maximum Drawdown (MDD)}
Maximum Drawdown represents the maximum observed loss from a peak to a trough of an asset's price, given by \( \text{MDD} = \max_{t \in [1,T]} \left( \frac{\max_{j \in [1,t]} P_j - P_t}{\max_{j \in [1,t]} P_j} \right) \), where \( P_t \) is the price at time \( t \), and \( T \) is the total time period considered.

\paragraph{Win Rate (Wr)}
Win Rate represents the percentage of time periods in which the portfolio achieves a positive return, defined as \( \text{Wr} = \frac{\sum_{t=1}^T \mathbb{I}(R_t > 0)}{T} \times 100\% \), where \( R_t \) is the return at time \( t \), \( T \) is the total number of time periods considered, and \( \mathbb{I}(R_t > 0) \) is an indicator function that equals 1 if \( R_t > 0 \), and 0 otherwise.

\section{Reproducibility Statement}
\subsection{Hyperparameter Selection}
We conducted hyperparameter tuning on a small-scale dataset to determine the optimal settings for minimizing the refusal-to-answer rate. The resulting hyperparameter settings are shown in Table~\ref{tab:123}, aiming to reduce the likelihood of model refusal while maintaining high response quality. In the reward function, \( \lambda \) is set to 0.1. We used LoRA (Low-Rank Adaptation) \citep{hu2021lora} to fine-tune the model, with key settings including \( \text{lora\_alpha} = 16 \), \( \text{lora\_dropout} = 0.1 \), and rank \( r = 64 \). 

\begin{table}[h]
    \centering
    \resizebox{\columnwidth}{!}{%
    \begin{tabular}{lccc}
        \toprule
        \textbf{Model} & \textbf{Temperature} & \textbf{Top-k} & \textbf{Top-p} \\
        \midrule
        Llama2-7b-chat        & 0.8 & 40 & 0.85 \\
        Llama2-13b-chat       & 0.7 & 50 & 0.90 \\
        Llama3-8b-instruct    & 0.7 & 30 & 0.80 \\
        vicuna-7b-chat        & 0.8 & 45 & 0.88 \\
        vicuna-13b-chat       & 0.7 & 50 & 0.92 \\
        Phi-3.5-mini-instruct & 0.9 & 35 & 0.86 \\
        Gemma-2-9b-it         & 0.9 & 25 & 0.83 \\
        GPT 3.5               & 0.8 & -& 0.80 \\
        GPT 4.0-preview       & 0.8 & -& 0.85 \\
        GPT 4o-mini           & 0.7 & -& 0.87 \\
        \bottomrule
    \end{tabular} 
    }
    \caption{Hyperparameter experiments.}
    \label{tab:123}
\end{table}

\subsection{Computational Resources and Code Availability}
The training and inference results required a total of over 9000 GPU hours using 25 A800 (80G) GPUs. We will release a user-friendly training framework along with the complete benchmark dataset in the future.

\end{document}